\newtheorem{theorem}{Theorem}
\definecolor{darkgreen}{rgb}{0.0, 0.4, 0.0}
\DeclareMathAlphabet\mathbfcal{OMS}{cmsy}{b}{n}
\DeclareMathOperator*{\st}{subject~to}
\DeclareMathOperator*{\argmin}{arg\,min}
\newcommand{\norm}[1]{\left\lVert#1\right\rVert}
\newcommand{\normtwo}[1]{\left\lVert#1\right\rVert_{2}}
\newcommand{\frobenius}[1]{\left\lVert#1\right\rVert_{F}}
\newcommand{\hatbf}[1]{\widehat{\mathbf{#1}}}
\newcommand{\starbf}[1]{\mathbf{#1}^\star}
\newcommand{\starhatbf}[1]{\widehat{\mathbf{#1}}^\star}
\newcommand{\starhatbm}[1]{\widehat{\bm{#1}}^\star}
\newcommand{\hatbm}[1]{\widehat{\bm{#1}}}
\newcommand{\starbm}[1]{\bm{#1}^\star}
\newcommand{\tildebm}[1]{\widetilde{\bm{#1}}}
\newcommand{\cbm}[1]{\bm{#1}^c}
\newtheorem{lemma}{Lemma}
\newtheorem{proposition}{Proposition}
\newtheorem{definition}{Definition} 
\newtheorem{remark}{Remark}
\newtheorem{assumption}{Assumption}
\begin{document}

\title{Near-Optimal Design of Safe Output Feedback Controllers from Noisy Data}

\author{Luca Furieri, Baiwei Guo$^\star$, Andrea Martin$^\star$, and Giancarlo Ferrari-Trecate
    \thanks{Authors are with the \'Ecole Polytechnique Fédérale de Lausanne (EPFL), Institute of Mechanical Engineering, CH-1015 Lausanne, Switzerland. E-mails: {\tt\footnotesize \{luca.furieri, baiwei.guo, andrea.martin, giancarlo.ferraritrecate\}@epfl.ch}.}
    \thanks{$^{\star}$Baiwei Guo and Andrea Martin contributed equally to this work.} 
    \thanks{Research supported by the Swiss National Science Foundation under the NCCR Automation (grant agreement 51NF40\textunderscore 80545).}
}

\maketitle
\allowdisplaybreaks
\begin{abstract}
As we transition towards the deployment of data-driven controllers for black-box cyberphysical systems, complying with hard safety constraints becomes a primary concern.
Two key aspects should be addressed when input-output data are corrupted by noise: how much uncertainty can one tolerate without compromising safety, and to what extent is the control performance affected? By focusing on finite-horizon constrained linear-quadratic problems, we provide an answer to these questions in terms of the model mismatch incurred during a preliminary identification phase. We propose a control design procedure based on a quasiconvex relaxation of the original robust problem and we prove that, if the uncertainty is sufficiently small, the synthesized controller is safe and near-optimal, in the sense that the suboptimality gap increases \emph{linearly} with the model mismatch level. Since the proposed method is independent of the specific identification procedure, our analysis holds in combination with state-of-the-art behavioral estimators beyond standard least-squares. The main theoretical results are validated by numerical experiments. 
\end{abstract}

\begin{IEEEkeywords}
Data-driven control,  Learning-Based Control, Linear Systems, Optimal Control, Robust control.
\end{IEEEkeywords}

\section{Introduction}
Many safety-critical engineering systems that play a crucial role in our modern society are becoming too complex to be accurately modeled through white-box state-space models \cite{baggio2021data}. As a consequence, most contemporary control approaches envision unknown black-box systems for which a safe and optimal behavior must be attained by solely relying on a collection of system's output trajectories in response to different inputs. 

Controllers for unknown systems can be designed according to two paradigms. \emph{Model-based} methods follow a two-step procedure: first,  data are exploited to identify the system parameters, and then a suitable controller is computed for the estimated model. On the other hand, \emph{model-free} methods aim at directly learning an optimal control policy, without explicitly reconstructing an internal representation of the dynamical system. For a description of advantages and limitations of both approaches, we refer to \cite{recht2019tour}, among recent surveys. 

Given the intricacy of deriving rigorous suboptimality and sample-complexity bounds, most recent model-based and model-free approaches have focused on basic Linear Quadratic Regulator (LQR) and Linear Quadratic Gaussian (LQG) control problems as suitable benchmarks to establish how machine learning  can be interfaced to the continuous action spaces typical of control~\cite{dean2019sample, fazel2018global,zheng2020sample,simchowitz2020improper,lale2020logarithmic,zhang2020policy,tsiamis2020sample}. For complex control tasks, it is more challenging to perform a thorough probabilistic analysis. Recent advances include \cite{dean2019safely,fattahi2020efficient} for constrained and distributed LQR control with direct state measurements, respectively, and \cite{furieri2020learning} for distributed output-feedback LQG.      

Model-based methods may pose a difficulty when it comes to accurately identifying the state-space model of a large-scale system; this is the case, for instance, for complex networked systems such as the power grid, brain and traffic networks \cite{baggio2021data}. A promising data-driven approach that aims at bypassing a parametric state-space description of the system dynamics, while still being conceptually simple to implement for the users, hinges on the \emph{behavioral framework} \cite{willems1997introduction}. This approach has gained renewed interest with the introduction of Data-EnablEd Predictive Control (DeePC) \cite{coulson2019data,coulson2021distributionally,dorfler2021bridging}, which established that constrained output reference tracking can be effectively tackled in a Model-Predictive-Control (MPC) fashion by plugging adequately generated data into a convex optimization problem. The work \cite{de2019formulas} introduces data-driven formulations for some controller design tasks, and \cite{berberich2020data} derives stability guarantees for closed-loop control.

In many scenarios, however, exact data are not available. For instance, data can be corrupted by measurement noise or even by malicious attacks intended at fatally compromising the safety \cite{russo2021poisoning}, the quality, and the reliability of the synthesized control policies. It is therefore essential that data-driven controllers are endowed with robustness guarantees. While some approaches have been suggested in the behavioral framework, e.g. \cite{coulson2021distributionally,alpago2020extended,yin2021maximum,vanwaarde2020noisy}, it remains fairly unexplored how much noise-corrupted data affect the performance and the safety of data-driven control systems.  Recently, \cite{xue2021data,de2021low} have derived suboptimality \cite{de2021low} and sample-complexity \cite{xue2021data} bounds for LQR through direct behavioral formulations based on 1) Linear Matrix Inequalities (LMI) and 2) the System Level Synthesis (SLS) approach, respectively. A limitation is that the internal system states must be measured, which is unrealistic for several large-scale systems \cite{baggio2021data}. Furthermore, while \cite{de2021low} proves that for low-enough noise a high-performing and robustly stabilizing controller can be found, the corresponding suboptimality growth rate is not explicitly derived. To address these open points, \cite{furieri2021behavioral} has formulated a Behavioral Input-Output Parametrization (BIOP) of linear control policies which makes it possible to derive noise-dependent suboptimality analysis for output-feedback LQG, solely based on a non-parametric estimation of the system open-loop responses. The BIOP can exploit any system identification technique such as the least-square methods of \cite{oymak2019non,zheng2020sample} or behavioral  maximum-likelihood (ML) estimation \cite{iannelli2020experiment,yin2021maximum}. In order to deploy these systems in real-world scenarios, it is however important to include safety guarantees in the analysis. All the mentioned works \cite{xue2021data,de2021low,furieri2021behavioral, zheng2020sample} do not include such safety requirements. 

\subsection{Contributions}
We propose a method for designing safe and near-optimal output-feedback control policies for linear systems in finite-horizon. Our approach is solely based on noisy data, and we explicitly characterize the growth rate of the suboptimality as a function of the mismatch between the true and estimated system. First, we develop a new relaxed optimization problem that guarantees safety while robustly accounting for noise-corrupted data. Second, we show that the incurred level of suboptimality converges to zero approximately as a linear function of the model mismatch incurred during a preliminary identification phase. Hence, upon using a consistent system estimator, the proposed controller is near-optimal in the limit of available data growing to infinity. The corresponding analysis differs from that of \cite{furieri2021behavioral,zheng2020sample}, in that a feasible solution to the proposed optimization problem must be characterized analytically while taking the safety constraints into account. In addition to dealing with constraints in an output-feedback setup — which is the main novelty with respect to \cite{zheng2020sample,furieri2021behavioral} and \cite{dean2019safely} — the effect of the \emph{uncertain} initial condition $x_0$ must be explicitly tracked in the cost. Indeed, \cite{zheng2020sample} assumed that $x_0 = 0$ thanks to the considered infinite-horizon setting. On a more general level, our analysis has been inspired by \cite{dean2019safely}, which combined robust control tools with classical identification techniques to ensure safety of unknown systems with suboptimality guarantees when states are fully observed.
As we only have access to noisy output measurements, we exploit an input-output representation of the plant and analyze four different closed-loop responses to understand how process and output measurement noises impact safety and performance.  Suboptimality with respect to the best model-based \emph{open-loop} control input has very recently been analyzed in  \cite{berberich2021linear} as a function of the noise-level. Instead, in the present paper we analyze the suboptimality brought about by closed-loop policies. In particular, we show a linear  growth rate of the suboptimality in terms of the model mismatch level as compared to the ground-truth constrained output-feedback controller.

A preliminary version of this work has recently appeared in the 60\textsuperscript{th} IEEE Conference on Decision and Control \cite{furieri2021behavioral}. Differently from \cite{furieri2021behavioral}, this paper includes safety constraints in the analysis, thus addressing a novel and independent set of challenges and results. Furthermore, this work includes all the technical proofs. Last, new numerical experiments are developed to consider safety constraints and to explicitly include the estimation procedure of \cite{yin2021maximum}. 

\subsection{Paper structure}
Assuming knowledge of the underlying dynamics, Section~\ref{sec:prob_statement} reviews the optimal control problem of interest and its model-based solution. Section~\ref{sec:Noisy_BIOP_constrained} treats the case where we only have access to noisy input and output data; we propose an optimal control problem that ensures safety against bounded model mismatches, and discuss its numerical implementation. Section~\ref{sec:suboptimality} quantifies the suboptimality incurred by our synthesis procedure as a function of the model mismatch. We present numerical experiments in Section~\ref{sec:simulations} and conclude the paper in Section~\ref{sec:conclusions}.

\subsection{Notation}
We use $\mathbb{R}$ and $\mathbb{N}$ to denote the sets of real numbers and non-negative integers, respectively. We use $I_n$ to denote the identity matrix of size $n \times n$ and  $0_{m \times n}$ to denote the zero matrix of size $m \times n$. We write $\mathbf{x} = \operatorname{vec}(x_1,\ldots,x_N) \in \mathbb{R}^{Nn}$ to denote the vector obtained by stacking together the vectors $x_1,\dots,x_N \in \mathbb{R}^{n}$,  and $\mathbf{M}=\text{blkdiag}(M_1,\dots,M_N)$ to denote a block-diagonal matrix with $M_1,\dots,M_N \in \mathbb{R}^{m \times n}$ on its diagonal block entries. For $\mathbf{M} =  \begin{bmatrix}M_1^\mathsf{T}&\dots&M_N^\mathsf{T}\end{bmatrix}^\mathsf{T}$ we define the block-Toeplitz matrix $$\footnotesize \operatorname{Toep}_{m \times n}\left(\mathbf{M}\right) \hspace{-0.075cm} = \hspace{-0.075cm} \begin{bmatrix}
           M_1&0_{m \times n}&\dots&0_{m \times n}\\ M_2&M_1&\dots &0_{m \times n}\\
           \vdots&\vdots&\ddots&\vdots\\
           M_N&M_{N-1}&\dots&M_1
    \end{bmatrix}
    .$$
More concisely, we will write $\operatorname{Toep}(\cdot)$ when the dimensions of the blocks are clear from the context. The Kronecker product between $M \in \mathbb{R}^{m \times n}$ and $P \in \mathbb{R}^{p \times q}$ is denoted as $M \otimes P \in \mathbb{R}^{mp \times nq}$.  For a vector $v \in \mathbb{R}^n$ and a matrix $A \in \mathbb{R}^{m \times n}$ we denote as $\norm{v}_p$, $\norm{A}_p$, their standard $p$-norm and induced $p$-norms, respectively. For a row vector $x \in \mathbb{R}^{1 \times n}$we define $\norm{x}^\star_1 = \sum_{i=1}^n|x_i|$. The Frobenius norm of a matrix $M \in \mathbb{R}^{m \times n}$ is denoted by $\norm{M}_{F}=\sqrt{\text{Trace}(M^\mathsf{T} M)}$. For a symmetric matrix $M$, 
we write $M \succ 0$ or $M \succeq 0$ if it is positive definite or positive semidefinite, respectively. We say that $x\sim \mathcal{D}(\mu,\Sigma)$ if the random variable  $x\in \mathbb{R}^n$ follows a distribution with mean $\mu \in \mathbb{R}^n$ and covariance matrix $\Sigma \in \mathbb{R}^{n \times n}, \Sigma \succeq 0$.

A finite-horizon trajectory of length $T$ is a sequence $\omega(0),\omega(1),\dots, \omega(T-1)$ with $\omega(t) \in \mathbb{R}^n$ for every $t=0,1,\dots, T-1$, which can be compactly written as
\begin{equation*}
    \bm{\omega}_{[0,T-1]} = \begin{bmatrix}\omega^\mathsf{T}(0)&\omega^\mathsf{T}(1)& \dots&\omega^\mathsf{T}(T-1)\end{bmatrix}^\mathsf{T} \in \mathbb{R}^{nT}\,.
\end{equation*}
When the value of $T$ is clear from the context, we will omit the subscript $[0,T-1]$. For a  finite-horizon trajectory $\bm{\omega}_{[0,T-1]}$ we also define the Hankel matrix of depth $L$ as
\begin{equation*}
\footnotesize
    \mathcal{H}_{L}(\bm{\omega}_{[0,T-1]}) = \begin{bmatrix}
           \omega(0)&\omega (1)&\dots&\omega(T-L)\\ \omega(1)&\omega(2)&\dots &\omega(T-L+1)\\
           \vdots&\vdots&\ddots&\vdots\\
           \omega(L-1)&\omega(L)&\dots&\omega(T-1)
    \end{bmatrix}\,.
\end{equation*}

\section{Problem Statement: the Model-Based Case}
\label{sec:prob_statement}
In this section, we review safe output-feedback controller synthesis when the system model is known. We consider a discrete-time linear system with output observations, whose state-space  representation is given by
\begin{equation} \label{eq:dynamic}
    \begin{aligned}
        x(t+1) = A x(t)+B u(t),~~y(t) = C x(t) + v(t)\,,
    \end{aligned}    
\end{equation}
where $x(t) \in \mathbb{R}^n$ is the state of the system and $x(0) =x_0$ for a predefined $x_0 \in \mathbb{R}^n$, $u(t)\in \mathbb{R}^m$ is the control input, $y(t)\in \mathbb{R}^p$ is the observed output, and $v(t)\in \mathbb{R}^p$ denotes  measurement noise $v(t) \sim \mathcal{D}(0,\Sigma_v)$, with $\Sigma_v \succ 0$.  The system is controlled through a time-varying, dynamic affine control policy
\begin{equation}
    \label{eq:input}
    u(t) = \sum_{k=0}^t K_{t,k} y(k)+ g_t+ w(t)\,,
\end{equation}
where $K_{t,k}$ and $g_t$ are the linear and affine parts of the policy, respectively, and  $w(t)\in \mathbb{R}^m$ denotes noise on the input $w(t) \sim \mathcal{D}(0,\Sigma_w)$ with $\Sigma_w \succeq 0$, which acts as process noise.\footnote{The more general model $x(t+1) = A x(t)+B u(t)+w(t)$ would make the cost function depend on a  specific realization $A,B$ explicitly \cite[Chapter 3]{van2012subspace}. Instead, the adopted noise model ensures that the cost only depends on the covariance matrix $\Sigma_w$ and the coordinate-free parameter $\mathbf{G}$, thus making our theoretical bounds meaningful in a data-driven input-output setting.} Furthermore, we assume that the noise is bounded with 
\begin{equation*}
\norm{w}_\infty \leq w_\infty\,, \quad \norm{v}_\infty \leq v_\infty\,,
\end{equation*}
where $w_\infty, v_\infty > 0$. We consider the problem of synthesizing  a feedback control policy that minimizes the expected value with respect to the disturbances of a quadratic objective defined over future input-output trajectories of length $N \in \mathbb{N}$:
    \begin{equation}
\label{eq:cost_output}
J^2:=\mathbb{E}_{w,v}\left[\sum_{t=0}^{N-1}\left(y(t)^\mathsf{T} Q_ty(t)+u(t)^\mathsf{T} R_tu(t)\right)\right]\,,
\end{equation}  
where $Q_t\succeq0$ and $R_t\succ 0$ for every $t = 0,\dots,N-1$.

The problem is made more challenging by the requirement that inputs and outputs satisfy the safety constraints
\begin{equation}
    \label{eq:constraints}
    \begin{bmatrix}y(t)\\u(t)\end{bmatrix} \in \Gamma_t \subseteq \mathbb{R}^{p+m}\,,\quad \forall t = 0, \dots, N-1\,,
\end{equation}
where $\Gamma_t$ is a nonempty polytope for every  $t = 0, \dots, N-1$ defined as
\begin{equation}
    \label{eq:polytope_definition}
    \Gamma_t = \left\{ (y,u) \in (\mathbb{R}^p,\mathbb{R}^m)|~F_y^t y \leq b_y^t, F_u^t u \leq b_u^t\right\}\,,
\end{equation}
with $F_y^t \in \mathbb{R}^{s \times p}$, $F_u^t \in \mathbb{R}^{s \times m}$ and $b_y^t,b_u^t \in \mathbb{R}^s$ for every  $t = 0, \dots, N-1$. Despite \eqref{eq:cost_output} being convex in the input and output trajectories and $\Gamma_t$ being polytopic, we highlight that minimizing \eqref{eq:cost_output} subject to \eqref{eq:dynamic}, \eqref{eq:input} and \eqref{eq:constraints} is a non-convex problem in the control policy parameters $K_{t,k}$ and $g_t$. We refer the interested reader to \cite{Goulart,sieber2021system, goulart2007output,zheng2019equivalence,furieri2019input} for classical and recent methods to overcome the non-convexity problem. For the rest of the paper, we assume that there exists a control input \eqref{eq:input} that complies with \eqref{eq:constraints} for all possible realizations of $w(t)$ and $v(t)$.

\begin{remark}
In this work, we analyze a  finite-horizon control problem, which represents one iteration of a receding-horizon MPC implementation. It is therefore appropriate to compare the proposed approach with a \emph{single} iteration of open-loop prediction approaches, such as the DeePC \cite{coulson2019data,berberich2020data}. The main difference is that we perform \emph{closed-loop predictions}, i.e., we optimize over feedback policies $\pi(\cdot)$ such that $u(t)=\pi(y(t),\dots,y(0))$, while the DeePC \cite{coulson2019data,berberich2020data} performs \emph{open-loop predictions}, i.e., it directly optimizes over input sequences  $u(0),u(1),u(N-1)$. It is well-known that closed-loop predictions are less conservative. Indeed, by setting $K_{t,k} = 0$ in \eqref{eq:input} the closed-loop policy reduces to an open-loop one. Most notably, closed-loop policies may preserve feasibility for significantly longer prediction horizons \cite{bemporad1998reducing}.  Naturally, the price to pay is an increased computational burden due to the larger dimensionality of the problem.  
\end{remark}

\subsection{Convex design through the IOP}
By leveraging tools offered by the framework of the  Input-Output Parametrization\footnote{Similar to \cite{goulart2007output,zheng2019equivalence}, the IOP \cite{furieri2019input} yields a convex representation of input-output closed-loop responses. It is also numerically stable for the case of infinite-horizon stable plants and for finite-horizon control problems \cite{zheng2019system}.} (IOP) \cite{furieri2019input}, one can formulate a convex optimization problem that computes the optimal safe feedback control policy by searching over the input-output closed-loop responses. The state-space equations \eqref{eq:dynamic} provide the following relations between trajectories
\begin{align}
    &\mathbf{x}_{[0,N-1]}=\mathbf{P}_{A}(:,0)x_0+\mathbf{P}_{B}\mathbf{u}_{[0,N-1]}\,,\label{eq:state_compact}\\ &\mathbf{y}_{[0,N-1]}=\mathbf{C}\mathbf{x}_{[0,N-1]}+\mathbf{v}_{[0,N-1]}\,,  \label{eq:output_compact}
\end{align}
where  $\mathbf{P}_A(:,0)$ denotes the first block-column of $\mathbf{P}_A$ and
\begin{alignat*}{3}
    &\mathbf{P}_{A}=(I-\mathbf{Z}\mathbf{A})^{-1}\,, &&\quad  \mathbf{P}_{B}=(I-\mathbf{Z}\mathbf{A})^{-1}\mathbf{Z}\mathbf{B}\,,\\
    & \mathbf{A}=I_{N} \otimes A\,, && \quad \mathbf{B}=I_{N}\otimes B\,,\\
    & \mathbf{C} = I_{N} \otimes C\,, && \quad \mathbf{Z}=\begin{bmatrix}
    0_{n \times n(N-1)}&0_{n \times n}\\
    I_{n(N-1)}&0_{n(N-1) \times n}
    \end{bmatrix}\,.
\end{alignat*}
A few comments on the used notation are in order. First, the matrix  $\mathbf{Z}$ is the block-downshift operator. Second, from now on we denote $\mathbf{G} = \mathbf{CP}_B$ to highlight that $\mathbf{G}$ is a block-Toeplitz matrix containing the first $N$ components of the impulse response of the plant $\mathbf{G}(z) = C(zI-A)^{-1}B$.  Last, the matrix $\mathbf{CP}_{A}(:,0)$ contains the entries of the observability matrix $CA^i$ for $i=0,\dots,N-1$. We denote the model-based free response of the system as $\mathbf{y}_{0}=\mathbf{CP}_{A}(:,0)x_0$. The control policy can be rewritten as:
\begin{equation}
    \label{eq:control_policy}
    \mathbf{u}_{[0,N-1]} = \mathbf{K}\mathbf{y}_{[0,N-1]}+ \mathbf{g}+ \mathbf{w}_{[0,N-1]}\,,
\end{equation}
where $\mathbf{K}$ and $\mathbf{g}$ are defined as:
\begin{equation}
\label{eq:K_sparsity}
{\small \mathbf{K} \hspace{-0.025cm} = \hspace{-0.075cm}
    \begin{bmatrix}
    K_{0,0}&0_{m \times p}&\dots&0_{m \times p}\\
    K_{1,0}&K_{1,1}&\ddots&0_{m \times p}\\
    \vdots&\vdots&\ddots&\vdots\\
    K_{N-1,0}&K_{N-1,1}&\dots&K_{N-1,N-1}
    \end{bmatrix}\hspace{-0.1cm},
    \mathbf{g} \hspace{-0.025cm} = \hspace{-0.075cm} \begin{bmatrix}g_0\\g_1\\\vdots\\ g_{N-1}\end{bmatrix}\hspace{-0.1cm}.}
\end{equation}
The safety constraints \eqref{eq:constraints}-\eqref{eq:polytope_definition} take the form
\begin{equation}
\label{eq:compact_constraints}
    \hspace{-0.4cm}\max_{\norm{\mathbf{v}}_\infty \leq v_\infty,~\norm{\mathbf{w}}_\infty \leq w_\infty} \hspace{-0.6cm}\mathbf{F}_y \mathbf{y} \leq \mathbf{b}_y\,,   \max_{\norm{\mathbf{v}}_\infty \leq v_\infty,\norm{\mathbf{w}}_\infty \leq w_\infty} \hspace{-0.6cm}\mathbf{F}_u \mathbf{u} \leq \mathbf{b}_u\,,
\end{equation}
with $\mathbf{F}_y \hspace{-0.25mm}=\hspace{-0.25mm} \operatorname{blkdiag}(F_y^0,\ldots,F_y^{N-1})$,
$\mathbf{b}_y \hspace{-0.25mm}=\hspace{-0.25mm} \operatorname{vec}(b_y^0,\ldots,b_y^{N-1})$,
$\mathbf{F}_u = \operatorname{blkdiag}(F_u^0,\ldots,F_u^{N-1})$,  $\mathbf{b}_u = \operatorname{vec}(b_u^0,\ldots,b_u^{N-1})$, and $\max(\cdot)$ to be intended row-wise.
By plugging the controller \eqref{eq:control_policy} into \eqref{eq:state_compact}-\eqref{eq:output_compact}, it is easy to derive the relationships
\begin{align}
    &\begin{bmatrix}\mathbf{y}\\ \mathbf{u}\end{bmatrix} =\begin{bmatrix}
           \bm{\Phi}_{yy} & \bm{\Phi}_{yu} \\
            \bm{\Phi}_{uy} & \bm{\Phi}_{uu} 
        \end{bmatrix}\begin{bmatrix} \mathbf{v}+\mathbf{y}_{0}\\ \mathbf{w}\end{bmatrix}+\begin{bmatrix}\mathbf{Gq}\\\mathbf{q}\end{bmatrix}\,, \label{eq:IOP_parameters}
\end{align}
where
\begin{equation}
   \bm{\Phi}\hspace{-0.12cm}=\hspace{-0.12cm}\begin{bmatrix}
           \bm{\Phi}_{yy} & \bm{\Phi}_{yu} \\
            \bm{\Phi}_{uy} & \bm{\Phi}_{uu} 
        \end{bmatrix}\hspace{-0.12cm} =\hspace{-0.12cm} \begin{bmatrix}(I\hspace{-0.1cm}-\hspace{-0.1cm}\mathbf{GK})^{-1} & (I\hspace{-0.1cm}-\hspace{-0.1cm}\mathbf{GK})^{-1}\mathbf{G}\\ \mathbf{K}(I-\mathbf{GK})^{-1} & (I-\mathbf{KG})^{-1}\end{bmatrix}\label{eq:CL_responses}\hspace{-0.1cm},
\end{equation}
and $\mathbf{q} = (I-\mathbf{KG})^{-1}\mathbf{g} = \bm{\Phi}_{uu}\mathbf{g}$. The parameters $(\bm{\Phi}_{yy}, \bm{\Phi}_{yu}, \bm{\Phi}_{uy}, \bm{\Phi}_{uu})$, {where $\bm{\Phi}_{yy}\in\mathbb{R}^{Np \times Np}, \bm{\Phi}_{yu}\in\mathbb{R}^{Np \times Nm}, \bm{\Phi}_{uy}\in\mathbb{R}^{Nm \times Np}$ and $\bm{\Phi}_{uu}\in\mathbb{R}^{Nm \times Nm}$}, represent the four closed-loop responses defining the relationship between disturbances and input-output signals, while $\mathbf{q} \in \mathbb{R}^{Nm}$ represents the affine part of the disturbance-feedback control policy \cite{goulart2007output,furieri2019unified}. To achieve a convex reformulation of the control problem under consideration, it is not hard to extend the IOP from \cite{furieri2019input} to account for the safety constraints \eqref{eq:compact_constraints} in a convex way. The result is summarized in the next proposition, whose proof is reported in Appendix \ref{app:prop:IOP} for completeness. 
\begin{proposition}
\label{prop:IOP}
Consider the LTI system \eqref{eq:dynamic} evolving under the control policy \eqref{eq:control_policy} within a horizon of length $N \in \mathbb{N}$. Then:
\begin{enumerate}[wide, labelwidth=!, labelindent=0pt]
    \item[$i)$] For any control policy $(\mathbf{K},\mathbf{g})$ that complies with the safety constraints, there exist four matrices ($\bm{\Phi}_{yy}, \bm{\Phi}_{yu}, \bm{\Phi}_{uy}, \bm{\Phi}_{uu}$) and a vector $\mathbf{q}$ such that $\mathbf{K} = \bm{\Phi}_{uy}\bm{\Phi}_{yy}^{-1}$, $\mathbf{g}= \bm{\Phi}_{uu}^{-1}\mathbf{q}$, and for all $j=1,\dots, sN$,
    \begin{align}
        &\begin{bmatrix} I & -\mathbf{G} \end{bmatrix}\bm{\Phi} = \begin{bmatrix} I & 0 \end{bmatrix},  \quad \bm{\Phi}\begin{bmatrix}  -\mathbf{G} \\I \end{bmatrix} = \begin{bmatrix} 0 \\ I\end{bmatrix},\label{eq:ach} \\
           &\norm{\begin{bmatrix}v_\infty(F_{y,j} \bm{\Phi}_{yy})^\mathsf{T}\\w_\infty(F_{y,j} \bm{\Phi}_{yu})^\mathsf{T}\end{bmatrix}^{\mathsf{T}}}^{\star}_1 \hspace{-0.1cm}+\hspace{-0.1cm} F_{y,j}(\mathbf{Gq}+ \bm{\Phi}_{yy}\mathbf{y}_{0})\leq\hspace{-0.05cm} \mathbf{b}_{y,j}\,, \label{eq:ach_constraints1}\\
            &\norm{\begin{bmatrix}v_\infty(F_{u,j}\bm{\Phi}_{uy})^\mathsf{T}\\w_\infty(F_{u,j} \bm{\Phi}_{uu})^\mathsf{T} \end{bmatrix}^{\mathsf{T}}}^{\star}_1 + F_{u,j}(\mathbf{q}+\bm{\Phi}_{uy}\mathbf{y}_{0}) \leq \mathbf{b}_{u,j} \,, \label{eq:ach_constraints2}\\
            &\bm{\Phi}_{yy}, \bm{\Phi}_{yu}, 
                \bm{\Phi}_{uy}, \bm{\Phi}_{uu} 
                \text{ with causal sparsities \footnotemark}\,, \label{eq:ach3}
    \end{align}
 where $F_{y,j} \in \mathbb{R}^{1 \times Np}$, $F_{u,j} \in \mathbb{R}^{1 \times Nm}$ and $\mathbf{b}_{u,j},\mathbf{b}_{y,j} \in \mathbb{R}$ are the $j$-th row of $\mathbf{F}_y$, $\mathbf{F}_u$ and $\mathbf{b}_{u},\mathbf{b}_y$, respectively.
    \footnotetext{Specifically, they have the block lower-triangular sparsities resulting as per the expressions \eqref{eq:CL_responses}, the sparsity of $\mathbf{K}$ in \eqref{eq:K_sparsity} and that  of $\mathbf{G}$.}
    \item[$ii)$] For any four matrices ($\bm{\Phi}_{yy}, \bm{\Phi}_{yu}, \bm{\Phi}_{uy}, \bm{\Phi}_{uu}$)  complying with \eqref{eq:ach}-\eqref{eq:ach3} and any vector $\mathbf{q} \in \mathbb{R}^{mN}$, the matrix $\mathbf{K}=\bm{\Phi}_{uy}\bm{\Phi}_{yy}^{-1}$ is causal as per \eqref{eq:K_sparsity} and it yields the closed-loop responses  ($\bm{\Phi}_{yy}, \bm{\Phi}_{yu}, \bm{\Phi}_{uy}, \bm{\Phi}_{uu}$). Moreover, the affine policy $(\mathbf{K},\mathbf{g}$) with $\mathbf{g} = \bm{\Phi}_{uu}^{-1}\mathbf{q}$ complies with the safety constraints.
\end{enumerate}
\end{proposition}

\vspace{0.2cm}

We remark that the IOP is well-suited to a data-driven output-feedback setup, as all affine control policies are directly parametrized through the impulse response parameters $\mathbf{G}$, without requiring an internal state-space representation. This is useful for two reasons. First, when dealing with unknown systems, the state-space parameters $(A,B,C,x_0)$ can only be estimated up to an unknown change of variable, which may be problematic for defining the cost and the noise statistics \cite{mania2019certainty}. Second, several large-scale systems feature a very large number of states, but a comparably small number of inputs and outputs, that is  $n>>\max(m,p)$. In such applications, it is advantageous to bypass a state-space representation and directly deal with $\mathbf{G}$, whose dimensions do not depend on $n$.

From now on, to simplify the expressions appearing throughout the next sections and without any loss of generality\footnote{One can redefine $\tilde{\mathbf{y}} = \begin{bmatrix}1&\mathbf{y}^\mathsf{T}\end{bmatrix}^\mathsf{T}$, $\overline{\mathbf{v}} = \begin{bmatrix}1&\mathbf{v}^\mathsf{T}\end{bmatrix}^\mathsf{T}$, $\overline{\mathbf{C}} = \begin{bmatrix}0_{1\times Nn}\\ \mathbf{C}\end{bmatrix}$, $\overline{\mathbf{K}} = \begin{bmatrix}\mathbf{g}&\mathbf{K}\end{bmatrix}$ and $\overline{\bm{\Phi}}$ as per \eqref{eq:CL_responses} with $\overline{\mathbf{G}}$ and  $\overline{\mathbf{K}}$ in place of $\mathbf{G}$ and $\mathbf{K}$, respectively. Minor modifications to \eqref{eq:ach_constraints1}-\eqref{eq:ach_constraints2} are needed as well.}, we let $\mathbf{q}=\mathbf{g} = 0_{Nm\times 1}$, that is, we focus on \emph{linear} control policies.  We are ready to establish a convex formulation of the optimal control problem under study. 
\begin{proposition}
\label{prop:strongly_convex}
Consider the LTI system \eqref{eq:dynamic}. The linear control policy that achieves the minimum of the cost functional \eqref{eq:cost_output} is given by $\mathbf{K} = \bm{\Phi}_{uy} \bm{\Phi}_{yy}^{-1}$, where $\bm{\Phi}_{uy},\bm{\Phi}_{yy}$ are optimal solutions to the following convex optimization problem:
\begin{align}
    &~\min_{\bm{\Phi}}\frobenius{
    \hspace{-0.025cm}
    \begin{bmatrix}\mathbf{Q}^{\frac{1}{2}}&0\\0&\mathbf{R}^{\frac{1}{2}}\end{bmatrix}
    \hspace{-0.1cm}
    \begin{bmatrix}
            \bm{\Phi}_{yy} & \bm{\Phi}_{yu} \\
            \bm{\Phi}_{uy} & \bm{\Phi}_{uu} 
        \end{bmatrix}
        \hspace{-0.1cm}
        \begin{bmatrix}\bm{\Sigma}^{\frac{1}{2}}_v&0&\mathbf{y}_{0}\\0&\bm{\Sigma}^{\frac{1}{2}}_w&0\end{bmatrix}
        \hspace{-0.025cm}
        }^2\label{prob:IOP}\\
    &\st~ \eqref{eq:ach}-\eqref{eq:ach3}\,, \nonumber
\end{align}
where $\mathbf{Q} = \text{blkdiag}(Q_0,\dots, Q_{N-1})$, $\mathbf{R} = \text{blkdiag}(R_0,\dots, R_{N-1})$, $\bm{\Sigma}_v = I_{N} \otimes \Sigma_v$, $\bm{\Sigma}_w = I_{N} \otimes \Sigma_w$ and where \eqref{eq:ach_constraints1}-\eqref{eq:ach_constraints2} are evaluated at $\mathbf{q}=0$.
\end{proposition}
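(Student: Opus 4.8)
The plan is to derive Proposition~\ref{prop:strongly_convex} as a direct corollary of the parametrization established in Proposition~\ref{prop:IOP}, by rewriting the stochastic cost \eqref{eq:cost_output} in terms of the closed-loop responses $\bm{\Phi}$ and showing it coincides with the Frobenius-norm objective \eqref{prob:IOP}. First I would invoke Proposition~\ref{prop:IOP} to replace the optimization over the (nonconvex) policy parameter $\mathbf{K}$ with an optimization over the affine subspace \eqref{eq:ach1}--\eqref{eq:ach3}: part~$i)$ guarantees every safe linear policy induces responses in this subspace, and part~$ii)$ guarantees every point of the subspace is realized by a causal, safe $\mathbf{K} = \bm{\Phi}_{uy}\bm{\Phi}_{yy}^{-1}$, so the two problems have the same value and the same minimizers. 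Under the standing simplification $\mathbf{q}=\mathbf{g}=0$, equation \eqref{eq:IOP_parameters} gives the closed-loop signals as the linear map $\begin{bmatrix}\mathbf{y}\\\mathbf{u}\end{bmatrix} = \bm{\Phi}\begin{bmatrix}\mathbf{v}+\mathbf{y}_{x_0}\\\mathbf{w}\end{bmatrix}$.

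Next I would compute the expected cost. Writing $\bm{\xi} = \begin{bmatrix}(\mathbf{v}+\mathbf{y}_{x_0})^\mathsf{T} & \mathbf{w}^\mathsf{T}\end{bmatrix}^\mathsf{T}$, the cost \eqref{eq:cost_output} is $J^2 = \mathbb{E}\big[\bm{\xi}^\mathsf{T}\bm{\Phi}^\mathsf{T}\,\text{blkdiag}(\mathbf{Q},\mathbf{R})\,\bm{\Phi}\,\bm{\xi}\big] = \Tr\!\big(\bm{\Phi}^\mathsf{T}\,\text{blkdiag}(\mathbf{Q},\mathbf{R})\,\bm{\Phi}\,\mathbb{E}[\bm{\xi}\bm{\xi}^\mathsf{T}]\big)$, using the trace/cyclic identity for quadratic forms. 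Since $\mathbf{v}$ has zero mean and covariance $\bm{\Sigma}_v$, $\mathbf{w}$ has zero mean and covariance $\bm{\Sigma}_w$, they are independent of each other and of the deterministic term $\mathbf{y}_{x_0}$, the cross-covariance between the two block components vanishes and $\mathbb{E}[\bm{\xi}\bm{\xi}^\mathsf{T}] = \text{blkdiag}(\bm{\Sigma}_v + \mathbf{y}_{x_0}\mathbf{y}_{x_0}^\mathsf{T},\ \bm{\Sigma}_w)$. One then checks by direct block multiplication that this matrix factors as $\begin{bmatrix}\bm{\Sigma}_v^{1/2}&0&\mathbf{y}_{x_0}\\0&\bm{\Sigma}_w^{1/2}&0\end{bmatrix}\begin{bmatrix}\bm{\Sigma}_v^{1/2}&0&\mathbf{y}_{x_0}\\0&\bm{\Sigma}_w^{1/2}&0\end{bmatrix}^\mathsf{T}$, so with $M := \text{blkdiag}(\mathbf{Q}^{1/2},\mathbf{R}^{1/2})$ and $S$ the $2\times 3$ block matrix in \eqref{prob:IOP}, $J^2 = \Tr(M\bm{\Phi}\,SS^\mathsf{T}\bm{\Phi}^\mathsf{T}M^\mathsf{T}) = \frobenius{M\bm{\Phi}S}^2$, which is exactly the objective of \eqref{prob:IOP}.

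Finally, since the Frobenius norm is convex and the constraint set \eqref{eq:ach1}--\eqref{eq:ach3} is the intersection of affine equalities with the convex constraints \eqref{eq:ach_constraints1}--\eqref{eq:ach_constraints2} (each of which is a sum of a $1$-norm, convex, and an affine term, hence convex), the program \eqref{prob:IOP} is convex; combining this with the equivalence from Proposition~\ref{prop:IOP} establishes that its optimal $\bm{\Phi}_{uy},\bm{\Phi}_{yy}$ yield the cost-minimizing linear safe policy $\mathbf{K}=\bm{\Phi}_{uy}\bm{\Phi}_{yy}^{-1}$. The only genuinely delicate point is the passage from the policy-space problem to the response-space problem: one must be careful that the safety constraints \eqref{eq:compact_constraints}, which involve a worst case over $\norm{\mathbf{v}}_\infty\le v_\infty$, $\norm{\mathbf{w}}_\infty\le w_\infty$, are exactly captured by the $1$-norm reformulation \eqref{eq:ach_constraints1}--\eqref{eq:ach_constraints2} (via the standard dual norm identity $\max_{\norm{z}_\infty\le 1} a^\mathsf{T} z = \norm{a}_1$), and that the cost is evaluated over the \emph{same} disturbance model as the constraints. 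Both facts are already delivered by Proposition~\ref{prop:IOP}, so the remaining work is the routine trace computation sketched above; I would present it compactly and defer any expanded algebra to an appendix.
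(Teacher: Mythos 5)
Your proposal is correct and follows essentially the same route as the paper: the paper's proof invokes Proposition~1 for the equivalence and the convexity of \eqref{eq:ach_constraints1}--\eqref{eq:ach_constraints2}, and defers the cost identity $J^2=\frobenius{M\bm{\Phi}S}^2$ to the ``standard norm arguments'' of Proposition~2 in \cite{furieri2021behavioral}, which is exactly the trace/covariance factorization $\mathbb{E}[\bm{\xi}\bm{\xi}^\mathsf{T}]=SS^\mathsf{T}$ you carry out explicitly. The only implicit hypothesis you should flag is the one the paper also uses tacitly, namely that $\mathbf{v}$ and $\mathbf{w}$ are mutually uncorrelated and white in time, so that the stacked covariance is $\text{blkdiag}(\bm{\Sigma}_v,\bm{\Sigma}_w)$.
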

\begin{proof}
We refer to Proposition~2 of \cite{furieri2021behavioral} for a complete derivation of the cost function. To conclude the proof, it suffices to notice that the objective function and the safety constraints \eqref{eq:ach_constraints1}-\eqref{eq:ach_constraints2} are convex in $\bm{\Phi}$.
\end{proof}

\vspace{0.2cm}

When the system parameters $(A,B,C,x_0)$ are known, a globally optimal solution $(\bm{\Phi}^\star_{yy}, \bm{\Phi}^\star_{yu}, \bm{\Phi}^\star_{uy}, \bm{\Phi}^\star_{uu}$) for problem~\eqref{prob:IOP} can be efficiently computed with off-the-shelf solvers. The corresponding globally optimal and safe control policy is then recovered as $\mathbf{K}^\star = \bm{\Phi}_{uy}^\star (\bm{\Phi}^\star_{yy})^{-1}$.

\vspace{0.2cm}
The rest of the paper contains our main contributions. Specifically, we address  the following two questions:
\begin{itemize}
    \item[Q1) ] How can we compute a safe control policy with performance close to that of $\mathbf{K}^\star$, solely based on libraries of \emph{noisy} input-output trajectories? 
    \item[Q2) ] How steeply does the suboptimality grow with respect to $\mathbf{K}^\star$ as the noise increases?
\end{itemize}

\section{The Data-Driven Case: Robustly Safe Controller Synthesis From Noisy Data}
\label{sec:Noisy_BIOP_constrained}

We answer question Q1) by developing a method to synthesize near-optimal safe controllers from noisy data. The main result of this section is an optimization problem based on the IOP that tightly approximates the optimal and safe control policy, despite the fact that the noise-corrupted data only yield approximate estimates of the system impulse and free response. We conclude by offering novel insights on its properties and its numerical implementation based on convex optimization.

\subsection{From noise-corrupted data to doubly-robust optimal control}

From now on, the dynamics matrices $(A,B,C)$ and the initial state $x_0$ are \emph{unknown}. Instead, only the following data are available:
\begin{enumerate}
    \itemsep0em 
    \item[$\mathbf{D1}$] A noisy system trajectory $\{y^h(t),u^h(t)\}_{t=-T}^{-1}$ recorded offline during an experiment.
    \item[$\mathbf{D2}$] The cost matrices $Q_t$, $R_t$, the matrices $\Sigma_v, \Sigma_w$, the safety sets $\Gamma_t$, and the bounded sets $\bm{\mathcal{W}} = \{\mathbf{w}|~\norm{\mathbf{w}}_\infty \leq w_\infty\}$ and  $\bm{\mathcal{V}} = \{\mathbf{v}|~\norm{\mathbf{v}}_\infty \leq v_\infty\}$ where disturbances live.
\end{enumerate}

Our approach exploits the noisy data in $\mathbf{D1}$ to compute approximate system responses $\hatbf{G}$ and $\hatbf{y}_0$ in a preliminary identification step. We work under the following assumption. 
\begin{assumption}
\label{ass:bounded_error}
Let $\bm{\Delta} = \mathbf{G}-\hatbf{G}$ and $\bm{\delta}_0= \mathbf{y}_{0}-\hatbf{y}_{0}$. There exist $\epsilon_{2,G},\epsilon_{\infty,G},\epsilon_{2,y},\epsilon_{\infty,y}>0$ such that, \begin{alignat*}{3}
&\normtwo{\bm{\Delta}} \leq \epsilon_{2,G}, ~~~ &&\normtwo{\bm{\delta}_0} \leq \epsilon_{2,y}\,,\\
&\norm{\bm{\Delta}}_\infty \leq \epsilon_{\infty,G},  && \norm{\bm{\delta}_{0}}_\infty \leq \epsilon_{\infty,y}\,.
\end{alignat*}

\end{assumption}

\smallskip
Note that, in practice, a meaningful bound on $\delta_0$ is only available if $A$ is stable or the time-horizon is sufficiently short. Let us define $\epsilon_2 = \max(\epsilon_{2,G},\epsilon_{2,y})$ and $\epsilon_{\infty} = \max(\epsilon_{\infty,G},\epsilon_{\infty,y})$.
Assumption~\ref{ass:bounded_error} can be fulfilled using different methods over the available data $\mathbf{D1}$; for instance, one may utilize standard least-squares identification that comes with probabilistic and non-asymptotic error bounds \cite{oymak2019non,zheng2020non}, or more sophisticated stochastic estimators based on behavioral theory such as maximum-likelihood predictors \cite{yin2021maximum}, which also come with quantifiable error bounds \cite{yin2021data}. Our results are independent of the choice of the identification scheme. A discussion as to how recent behavioral approaches can be used for identification is reported in Appendix~\ref{app:Willems}. These estimators will be used in the numerical examples in Section~\ref{sec:simulations}.

After condensing the effect of noise-corrupted data into model mismatch parameters $\bm{\Delta},\bm{\delta}_0$, we formulate a \emph{doubly-robust}
control problem, that is, a problem where we enforce constraint satisfaction for $1)$ all possible model mismatches ($\bm{\Delta},\bm{\delta}_0)$, and $2)$ all possible disturbances sequences $\mathbf{w}\in \bm{\mathcal{W}}$ and $\mathbf{v} \in \bm{\mathcal{V}}$. In particular, define $\bm{\theta} = (\bm{\Delta},\bm{\delta}_0,\mathbf{w},\mathbf{v})$ and let
\begin{align*}
    \mathbf{y}(\mathbf{K},\bm{\theta}) &= \hatbf{y}_0+\bm{\delta}_0+(\hatbf{G}+\bm{\Delta})\mathbf{u}(\mathbf{K},\bm{\theta})+\mathbf{v}\,,\\
    \mathbf{u}(\mathbf{K},\bm{\theta})&=\mathbf{K}\mathbf{y}(\mathbf{K},\bm{\theta}) +\mathbf{w}\,,
\end{align*}
be the closed-loop trajectories associated with a specific controller $\mathbf{K}$ and disturbance and mismatch realizations $\bm{\theta}$. Further, define the set of doubly-robust controllers as:
\begin{equation*}
    \bm{\mathcal{K}}=\{\mathbf{K}\text{ in \eqref{eq:K_sparsity}}|~(\mathbf{y}(\mathbf{K},\bm{\theta}),\mathbf{u}(\mathbf{K},\bm{\theta})) \in \bm{\Gamma},~\forall \bm{\theta} \in \bm{\mathcal{E}}\times \bm{\mathcal{W}} \times \bm{\mathcal{V}}\}\,,
\end{equation*}
with $\bm{\mathcal{E}} = \{ (\bm{\Delta},\bm{\delta}_0)|\norm{\bm{\Delta}}_p \leq \epsilon_{p},~\norm{\bm{\delta}_0}_p \leq \epsilon_{p}\text{, $\forall p\in \{2,\infty\}$}\}$, $\bm{\Gamma} = \Gamma_0 \times \Gamma_1 \times \cdots \times \Gamma_{N-1}$, and assume that $\bm{\mathcal{K}}$ is not empty. Then, the doubly-robust problem of interest takes the form
\begin{alignat}{3}
\min_{\mathbf{K} \in \bm{\mathcal{K}}}\max_{(\bm{\Delta,\bm{\delta}_0}) \in \bm{\mathcal{E}}}\sqrt{\hspace{-0.1cm}\underset{\mathbf{w},\mathbf{v}}{\mathbb{E}}\left[\mathbf{y}(\mathbf{K},\hspace{-0.05cm}\bm{\theta})^\mathsf{T}\hspace{-0.05cm}\mathbf{y}(\mathbf{K},\hspace{-0.05cm}\bm{\theta})\hspace{-0.05cm}+\hspace{-0.05cm}\mathbf{u}(\mathbf{K},\hspace{-0.05cm}\bm{\theta})^\mathsf{T}\mathbf{u}(\mathbf{K},\hspace{-0.05cm}\bm{\theta})\right]}\,,\label{prob:robust_deltas}
\end{alignat}
where we have selected the weights $\mathbf{Q}$, $\mathbf{R}$, $\bm{\Sigma}_w$, $\bm{\Sigma}_v$ to be identity matrices with appropriate dimensions. The same assumption is used in the rest of the paper, in order to facilitate the derivations. However, we note that all our results can be easily adapted to non-identity weights. Next, we observe that the doubly-robust optimization problem admits an equivalent formulation in terms of the closed-loop response parameters.

\begin{proposition}
\label{pr:robust_IOP}
Letting $\bm{\Phi}_{yy} = \hatbm{\Phi}_{yy}(I-\mathbf{\Delta}\hatbm{\Phi}_{uy})^{-1},~\bm{\Phi}_{yu} = \bm{\Phi}_{yy}(\hatbf{G}+\mathbf{\Delta}),~
\bm{\Phi}_{uy} = \hatbm{\Phi}_{uy}(I-\mathbf{\Delta}\hatbm{\Phi}_{uy})^{-1},~\bm{\Phi}_{uu} = (I-\hatbm{\Phi}_{uy}\mathbf{\Delta})^{-1}\hatbm{\Phi}_{uu}$, the optimization problem \eqref{prob:robust_deltas} is equivalent to
\begin{align}
\min_{\hatbm{\Phi} \in \bm{\Pi}} \max_{(\bm{\Delta,\bm{\delta}_0}) \in \bm{\mathcal{E}}}\norm{\begin{bmatrix}
           \bm{\Phi}_{yy}&\bm{\Phi}_{yu}\\\bm{\Phi}_{uy}&\bm{\Phi}_{uu}\end{bmatrix} \Bigg[\begin{matrix}I&0&\hatbf{y}_{0}+\bm{\delta}_0 \\0&I&0\end{matrix}\Bigg]  }_F\label{prob:robust_IOP}\,,
\end{align}
where the set of doubly-robust closed-loop responses $\bm{\Pi}$ is
\begin{equation*}
        \bm{\Pi}=\{\hatbm{\Phi}|~\eqref{eq:constraints_hatPhi1}-\eqref{eq:constraints_hatPhi3}, ~ \forall j=1,\dots,sN,~\forall (\bm{\Delta},\bm{\delta}_0) \in \bm{\mathcal{E}} \}\,,
\end{equation*}
with 
\begin{align}
        &\begin{bmatrix}
             I&-\hatbf{G}
             \end{bmatrix}\hatbm{\Phi}=\begin{bmatrix}
             I&0
             \end{bmatrix}, \quad \hatbm{\Phi} 
             \begin{bmatrix}
             -\hatbf{G}\\I
             \end{bmatrix}=\begin{bmatrix}
             0\\I
             \end{bmatrix}, \label{eq:constraints_hatPhi1} \\
        &  \norm{\begin{bmatrix}v_\infty\left(F_{y,j} \bm{\Phi}_{yy}\right)^\mathsf{T}\\w_\infty\left(F_{y,j} \bm{\Phi}_{yu}\right)^\mathsf{T}\end{bmatrix}}_1\hspace{-0.2cm}+\hspace{-0.1cm}  \left(F_{y,j} \bm{\Phi}_{yy}\right)(\hatbf{y}_{0}+\bm{\delta}_0)\leq \mathbf{b}_{y,j}\,, \label{eq:constraints_delta1}\\
            &  \norm{\begin{bmatrix}v_\infty\left(F_{u,j}\bm{\Phi}_{uy}\right)^\mathsf{T}\\w_\infty\left(F_{u,j} \bm{\Phi}_{uu}\right)^\mathsf{T} \end{bmatrix}}_1 \hspace{-0.2cm}+\hspace{-0.1cm}  \left(F_{u,j} \bm{\Phi}_{uy}\right)(\hatbf{y}_{0}\hspace{-0.1cm}+\hspace{-0.1cm}\bm{\delta}_0) \leq \mathbf{b}_{u,j} \label{eq:constraints_delta2}\,,\\
            &    \hatbm{\Phi}_{yy}, \hspace{-0.05cm}\hatbm{\Phi}_{yu}, \hspace{-0.05cm}\hatbm{\Phi}_{uy}, \hspace{-0.05cm} \hatbm{\Phi}_{uu} \text{ with causal sparsities.}    \label{eq:constraints_hatPhi3}
    \end{align}
\end{proposition}
\color{black}
The proof of Proposition~\ref{pr:robust_IOP} can be found in Appendix \ref{app:prop:robust_IOP}. We remark that the closed-loop responses $\bm{\Phi}$ appearing in \eqref{prob:robust_IOP}, \eqref{eq:constraints_delta1} and \eqref{eq:constraints_delta2} are associated with the $\emph{true}$ impulse response, whereas the closed-loop responses $\hatbm{\Phi}$ appearing in \eqref{eq:constraints_hatPhi1} and \eqref{eq:constraints_hatPhi3} are associated with the \emph{estimated} impulse response. This is because, while we are interested in minimizing the cost and satisfying the safety constraints for the \emph{real} system, we can only parametrize the closed-loop responses for the identified system.

The robust optimization problem \eqref{prob:robust_IOP} is non-convex in the cost and in the constraints because $\bm{\Phi}$ is a nonlinear function of the  matrix variables $\hatbm{\Phi}$ and $\bm{\Delta}$. Therefore, it is challenging to find a feasible solution, let alone the optimal one. We note that, for the case of open-loop control policies, one may use constraint-tightening approaches such as those of \cite{tanaskovic2014adaptive,terzi2019learning}. In this work, we propose an analysis that compares feedback control policies. Specifically, we derive suboptimality guarantees with respect to the optimal model-based linear feedback policy as a function of the model mismatch level.

\subsection{Proposed relaxation for safe controller synthesis}
\label{sub:proposed_reformulation}
Our first main result is to derive a relaxation of the intractable problem \eqref{prob:robust_IOP} that we can solve in practice.
Our proposed approach is to 1) upper bound the cost function, and 2) tighten the safety constraints with more tractable expressions. In Section~\ref{sec:suboptimality} we will explicitly quantify the suboptimality incurred by these approximations. At its core, this methodology is inspired by that developed in \cite{dean2019safely} for the state-feedback case without measurement noise. However, the addition of output-feedback and measurement noise leads to new terms both in the cost and the safety constraints that are more challenging to analyze. 

The following two lemmas establish the basis for our relaxation. Let $J(\mathbf{G},\mathbf{K}) = \sqrt{\mathbb{E}_{\mathbf{w},\mathbf{v}}\left[\mathbf{y}^\mathsf{T}\mathbf{y}+\mathbf{u}^\mathsf{T}\mathbf{u}\right]}$ denote the square root of the cost in \eqref{eq:cost_output}. Lemma~\ref{le:upperbound} provides the new expression which upper bounds $J(\mathbf{G},\mathbf{K})$ and Lemma~\ref{le:quasiconvex_safetyconstraints} provides a tightened form of the safety constraints. Their rather lengthy technical proof is reported in the Appendices \ref{app:le:upperbound} and \ref{app:le:quasiconvex_safetyconstraints}, respectively. 
\begin{lemma}
\label{le:upperbound}
Let $\hatbm{\Phi}$ denote the closed-loop responses obtained by applying $\mathbf{K}$ to $\hatbf{G}$. Further assume that $\norm{\hatbm{\Phi}_{uy}}_2 \leq \gamma$, where $\gamma \in [0,\epsilon_2^{-1})$.  Then, we have
\begin{align}
     \label{eq:nonconvex_bound}
    & J(\mathbf{G},\mathbf{K}) \leq  \frac{J_{UB}}{1-\epsilon_2 \gamma} 
 \end{align}
 where 
 {\small
 \begin{equation*}
     \hspace{-0.08cm}J_{UB}= \hspace{-0.08cm}\norm{\begin{bmatrix}
              \sqrt{1\hspace{-0.08cm}+\hspace{-0.08cm}h(\epsilon_2,\gamma,\hatbf{G}) \hspace{-0.08cm}+\hspace{-0.08cm} h(\epsilon_2,\gamma,\hatbf{y}_0)}\hatbm{\Phi}_{yy} & \hatbm{\Phi}_{yu} & \hatbm{\Phi}_{yy}\hatbf{y}_0\\\sqrt{1+ h(\epsilon_2,\gamma,\hatbf{y}_0)}\hatbm{\Phi}_{uy} & \hatbm{\Phi}_{uu} & \hatbm{\Phi}_{uy}\hatbf{y}_0
             \end{bmatrix}}_F \hspace{-0.2cm},
 \end{equation*}
 }
and $h(\epsilon, \gamma,\mathbf{Y}) = \epsilon^2(2 + \gamma\|\mathbf{Y}\|_2 )^2+2\epsilon\norm{\mathbf{Y}}_{2}(2+\gamma\norm{\mathbf{Y}}_2)$.
\end{lemma}

Lemma~\ref{le:upperbound} exploits the upper bound $\norm{\hatbm{\Phi}_{uy}}_2\leq \gamma$ to establish an explicit relationship between  $J(\mathbf{G},\mathbf{K})$, the cost obtained by applying a controller $\mathbf{K}$ to the real system $\mathbf{G}$, and $J(\hatbf{G},\mathbf{K})$, the cost obtained by applying the same controller to the estimated system $\hatbf{G}$. To see this, notice that \eqref{eq:nonconvex_bound} can be equivalently rewritten as
\begin{align}
    J(\mathbf{G},\mathbf{K})   &\leq \frac{\Big(J(\hatbf{G},\mathbf{K})^2 +
    \|\hatbm{\Phi}_{yy}\|_F^2 (h(\epsilon_2,\gamma,\hatbf{G})+}{1-\epsilon_2 \gamma}\nonumber\\&  \frac{+
    h(\epsilon_2,\gamma,\hatbf{y}_0)) + 
    \|\hatbm{\Phi}_{uy}\|_F^2 h(\epsilon_2,\gamma,\hatbf{y}_0)\Big)^{\frac{1}{2}}}{1-\epsilon_2 \gamma}\,.\label{eq:upper_bound_reformulated}
\end{align}
The expression \eqref{eq:upper_bound_reformulated} upper bounds the gap between $J(\mathbf{G},\mathbf{K})$ and $J(\hatbf{G},\mathbf{K})$ as a quantity that increases with $\epsilon_2$ and with the norm of  $\hatbf{G},\hatbf{y}_0,\hatbm{\Phi}$. We note that a similar result has appeared in \cite[Proposition 3.2]{zheng2020sample}. However, Lemma~\ref{le:upperbound} additionally takes into account how an uncertain $\hat{x}(0)$ affects the cost through the free response $\hatbf{y}_0$. 
We now derive a tightened - yet more tractable - expression for the safety constraints \eqref{eq:constraints_delta1}-\eqref{eq:constraints_delta2}.
\begin{lemma}
\label{le:quasiconvex_safetyconstraints}
Assume $\norm{\hatbm{\Phi}_{uy}}_\infty \leq \tau$, where $\tau \in [0,\epsilon_\infty^{-1})$. Then, if for all $j=1,\dots,sN$ the closed-loop responses $\hatbm{\Phi}$ satisfy the tightened safety constraints
 \begin{align}
     &f_{1,j}(\hatbm{\Phi})+f_{2,j}(\hatbm{\Phi})+f_{3,j}(\hatbm{\Phi}) \leq \mathbf{b}_{y,j}\,,\label{eq:quasiconvex_constraints1}\\
     &f_{4,j}(\hatbm{\Phi})+f_{5,j}(\hatbm{\Phi})+f_{6,j}(\hatbm{\Phi}) \leq \mathbf{b}_{u,j}\label{eq:quasiconvex_constraints2}\,,
 \end{align}
where
 \begin{alignat*}{3}
     &f_{1,j}(\hatbm{\Phi}) = \frac{v_\infty \norm{F_{y,j} \hatbm{\Phi}_{yy}}^{\star}_1}{1-\epsilon_\infty \tau}\,, \quad f_{4,j}(\hatbm{\Phi}) = \frac{v_\infty \norm{F_{u,j} \hatbm{\Phi}_{uy}}^{\star}_1}{1-\epsilon_\infty \tau}\,,\\
     &f_{2,j}(\hatbm{\Phi}) = w_\infty\norm{\begin{bmatrix}\left(F_{y,j} \hatbm{\Phi}_{yu}\right)^\mathsf{T}\\\epsilon_\infty\frac{1+\tau \norm{\hatbf{G}}_\infty}{1-\epsilon_\infty \tau} \left(F_{y,j}\hatbm{\Phi}_{yy}\right)^\mathsf{T}\end{bmatrix}}_1\,,\\
     &f_{5,j}(\hatbm{\Phi}) = w_\infty \norm{\begin{bmatrix}\left(F_{u,j}\hatbm{\Phi}_{uu}\right)^\mathsf{T}\\\epsilon_\infty\frac{1+\tau \norm{\hatbf{G}}_\infty}{1-\epsilon_\infty \tau}\left(F_{u,j}\hatbm{\Phi}_{uy}\right)^\mathsf{T}\end{bmatrix}}_1\,,\\
     &f_{3,j}(\hatbm{\Phi}) = F_{y,j} \hatbm{\Phi}_{yy} \hatbm{y}_0+ \epsilon_\infty \hspace{-0.1cm}\norm{F_{y,j} \hatbm{\Phi}_{yy}}^{\star}_1
     \left(
     \frac{1+\tau \norm{\hatbf{y}_0}_\infty}{1-\epsilon_\infty \tau}
     \right)\,,\\
     &f_{6,j}(\hatbm{\Phi}) = F_{u,j} \hatbm{\Phi}_{uy} \hatbm{y}_0+ \epsilon_\infty\hspace{-0.1cm} \norm{F_{u,j} \hatbm{\Phi}_{uy}}^{\star}_1
     \left(
     \frac{1+\tau \norm{\hatbf{y}_0}_\infty}{1-\epsilon_\infty \tau}\right)\,,
  \end{alignat*}
  then $\hatbm{\Phi}$ satisfies the safety constraints \eqref{eq:constraints_delta1}-\eqref{eq:constraints_delta2} for all $(\bm{\Delta},\bm{\delta}_0) \in \bm{\mathcal{E}}$. 
\end{lemma}

Lemma~\ref{le:quasiconvex_safetyconstraints} exploits the upper bound $\norm{\hatbm{\Phi}_{uy}}_\infty\leq \tau$ to quantify the worst-case effect of the disturbances in increasing the values of the inputs and the outputs. In our setup, similar to \cite{dean2019safely}, the feasible set shrinks in the presence of larger impulse and free response estimation error $\epsilon_\infty$. This is because \eqref{eq:quasiconvex_constraints1}-\eqref{eq:quasiconvex_constraints2} are more restrictive, and will eventually become infeasible for sufficiently large $\epsilon_\infty$. Instead, the effect of increasing the value of $\tau$ is less intuitive. Indeed,  as $\tau$ increases, the constraint $\norm{{\hatbf{\Phi}}_{uy}}_\infty \leq \tau$ softens while \eqref{eq:quasiconvex_constraints1}-\eqref{eq:quasiconvex_constraints2} tighten. It is therefore necessary to explicitly optimize over $\tau$. We are now ready to establish a  relaxation of problem~\eqref{prob:robust_IOP}. 

\begin{theorem}
\label{thm: robust formulation}
Consider the following optimization problem: 
\begin{alignat}{3}
&\min_{\gamma \in [0,\epsilon_2^{-1}), \tau \in [0,\epsilon_\infty^{-1})}&& \frac{1}{1-\epsilon_2 \gamma}  \min_{\hatbm{\Phi}} \qquad J_{UB} \label{prob:quasi_convex}\\
&\st~&&\eqref{eq:constraints_hatPhi1},~\eqref{eq:constraints_hatPhi3}\,,\nonumber \\
        &~~&& \norm{\hatbm{\Phi}_{uy}}_2 \leq \gamma\,, \quad \norm{\hatbm{\Phi}_{uy}}_\infty \leq \tau\,,\label{eq:norm_constraints_quasiconvex}\\
     &~&& \eqref{eq:quasiconvex_constraints1}-\eqref{eq:quasiconvex_constraints2}, \quad \forall j=1,\dots,sN \nonumber\,,
\end{alignat}
where  $J_{UB}$ is defined in Lemma~\ref{le:upperbound}. Then,  \eqref{prob:quasi_convex} has the following properties:
\begin{enumerate}
    \item[$i)$] upon fixing any specific values for $\gamma \in [0,\epsilon_2^{-1})$ and $\tau \in [0,\epsilon_\infty^{-1})$, the optimization problem is convex in $\hatbm{\Phi}$,
    \item[$ii)$] all of its feasible solutions yield a controller $\hatbf{K} = \hatbm{\Phi}_{uy}\hatbm{\Phi}_{yy}^{-1}$ complying with the safety constraints \eqref{eq:ach_constraints1}-\eqref{eq:ach_constraints2} for the real system,
    \item[$iii)$] its minimal cost upper bounds that of \eqref{prob:robust_deltas}.
\end{enumerate}
\end{theorem}
\begin{proof}
Lemma~\ref{le:upperbound}  shows that the cost of \eqref{prob:quasi_convex} upper bounds $J(\mathbf{G},\mathbf{K}) = J(\mathbf{G},\hatbm{\Phi}_{uy}\hatbm{\Phi}_{yy}^{-1})$ for every feasible $\mathbf{K}$. Lemma~\ref{le:quasiconvex_safetyconstraints} shows that \eqref{eq:quasiconvex_constraints1}-\eqref{eq:quasiconvex_constraints2} imply the doubly-robust constraints \eqref{eq:constraints_delta1}-\eqref{eq:constraints_delta2} for all $(\bm{\Delta},\bm{\delta}_0) \in \bm{\mathcal{E}}$. Hence, $\hatbf{K} = \hatbm{\Phi}_{uy}\hatbm{\Phi}_{yy}^{-1}$ complies with safety constraints \eqref{eq:ach_constraints1}-\eqref{eq:ach_constraints2} for the real system. When $\gamma$ and $\tau$ are fixed, it remains to optimize over $\hatbm{\Phi}$. The cost function is convex in $\hatbm{\Phi}$ and so are the constraints of the inner optimization problem.
\end{proof}

Theorem~\ref{thm: robust formulation} shows that problem~\eqref{prob:robust_IOP}, which is non-convex in its matrix variables, can be approximated as the problem of  solving a convex optimization problem\footnote{Specifically, a semidefinite program (SDP) due to the presence of quadratic $\norm{\cdot}_2$ constraints.} for each choice of the  scalar variables $\gamma$ and $\tau$. The $\epsilon$-dependent suboptimality introduced by such an approximation will be quantified in the next section. The global optimum of \eqref{prob:quasi_convex} is thus determined by exhaustive search over the box $(\gamma,\tau) \in [0,\epsilon_2^{-1})\times [0,\epsilon_\infty^{-1})$, for instance through gridding, random search \cite{bergstra2012random} or bisection \cite{chen2020robust}. Gridding over $(\tau,\gamma)$ and solving a convex optimization problem each time may significantly increase the computational burden if we are interested in determining a near-optimal solution with very low tolerance. Similarly to \cite{zheng2020sample}, in the next proposition we show that the inner cost function in problem \eqref{prob:quasi_convex} can be made independent of $\gamma$ by introducing a parameter $\alpha \in \mathbb{R}$ that acts as an upper bound to $\gamma$. As a result, the overall cost becomes quasiconvex\footnote{A function $f:\mathbb{R}^n\rightarrow \mathbb{R}$ is quasiconvex if and only if $f(\theta x_1+(1-\theta)x_2)\leq \max(f(x_1),f(x_2))$ for every $x_1,x_2 \in \mathbb{R}^n$ and every $\theta \in [0,1]$. We refer to  \cite{agrawal2020disciplined} for a comprehensive discussion.} in $\gamma$, and the globally optimal $\gamma^\star(\tau)$ for each fixed $\tau$ can be found efficiently through golden-section search \cite{kiefer1953sequential}.
\begin{proposition}
\label{pr:alpha}
Fix $ \alpha \in [0, \epsilon_2^{-1})$ and consider the following  optimization problem
\begin{alignat}{3}
&\min_{\gamma \in [0,\alpha], \tau \in [0,\epsilon_\infty^{-1})}&& \frac{1}{1-\epsilon_2 \gamma}  \min_{\hatbm{\Phi}} \qquad J_{UB}^\alpha(\hatbm{\Phi}) \label{prob:quasi_convex_alpha}\\
&\st~&&\eqref{eq:constraints_hatPhi1},~\eqref{eq:constraints_hatPhi3},~\eqref{eq:quasiconvex_constraints1},~ \eqref{eq:quasiconvex_constraints2},~\eqref{eq:norm_constraints_quasiconvex} \nonumber \\
     &~&& \forall j=1,\dots,sN \nonumber\,,
\end{alignat}
where  $J_{UB}^\alpha(\hatbm{\Phi})$ is defined as
\begin{equation*}
    \norm{\begin{bmatrix}
              \sqrt{1\hspace{-0.08cm} +\hspace{-0.08cm} h(\epsilon_2,\alpha,\hatbf{G})\hspace{-0.08cm} +\hspace{-0.08cm}  h(\epsilon_2,\alpha,\hatbf{y}_0)}\hatbm{\Phi}_{yy} & \hatbm{\Phi}_{yu} & \hatbm{\Phi}_{yy}\hatbf{y}_0\\\sqrt{1+ h(\epsilon_2,\alpha,\hatbf{y}_0)}\hatbm{\Phi}_{uy} & \hatbm{\Phi}_{uu} & \hatbm{\Phi}_{uy}\hatbf{y}_0
             \end{bmatrix}}_F\hspace{-0.15cm} .
\end{equation*}
Then, the statements $i)$, $ii)$ and $iii)$ of Theorem~\ref{thm: robust formulation} hold. Furthermore
\begin{itemize}
    \item [$iv)$] The cost function of problem \eqref{prob:quasi_convex_alpha} is quasiconvex in $\gamma$. 
\end{itemize}
\end{proposition}
\begin{proof}
Since $\gamma \leq \alpha$, $\alpha< \epsilon_2^{-1}$, and $h(\epsilon,\gamma,\cdot)$ is a monotonically increasing function of $\gamma$, then the inequality $\eqref{eq:nonconvex_bound}$ in Lemma~\ref{le:upperbound} continues to hold when putting $\alpha$ in place of $\gamma$ inside the $h(\cdot)$ functions. The constraints of \eqref{prob:quasi_convex} are unaffected. Hence, $i)$, $ii)$ and $iii)$ of Theorem~\ref{thm: robust formulation} continue to hold. It remains to prove $iv)$. Let us fix any value for $\tau$. First, notice that $J_{UB}^\alpha(\hatbm{\Phi})$ is a convex function of $\hatbm{\Phi}$ and does not depend on $\gamma$, and that the feasible set of the inner minimization in problem \eqref{prob:quasi_convex_alpha} is convex. Denote as $g(\gamma)$ the optimal value of the inner optimization problem. We are left with minimizing the functional $\frac{g(\gamma)}{1-\epsilon_2 \gamma}$ over $\gamma$. We know that $g(\gamma)$ is convex in $\gamma$ because it is obtained as the partial minimization of a convex functional over a convex set \cite{boyd2004convex}, and that $(1-\epsilon_2\gamma)$ is concave in $\gamma$. Since the ratio of a non-negative convex function and a positive concave function is quasiconvex, we conclude that the cost of problem~\eqref{prob:quasi_convex_alpha} is quasiconvex in $\gamma$.
\end{proof}

In \cite{zheng2020sample}, the idea of using the parameter $\alpha$ was relying on a lemma from \cite{matni2017scalable}. Here, we have derived an alternative self-contained proof that holds also for the case $x_0\neq 0$. In summary, for a fixed $\alpha < \epsilon_2^{-1}$, for every $\tau$ gridding the interval $[0,\epsilon_\infty^{-1}]$ and for $\gamma$ chosen according to golden-search, we solve the corresponding instance of the inner optimization problem in \eqref{prob:quasi_convex_alpha}, which is convex in $\hatbm{\Phi}$. We also note that an infinite-horizon version of problem~\eqref{prob:quasi_convex_alpha} can be established by adding a tail variable and adopting a finite-horizon approximation of stable transfer functions similar to \cite{dean2019safely}.

Last, one may wonder whether the cost function of problem~\eqref{prob:quasi_convex_alpha} is jointly quasiconvex in $\gamma$ and $\tau$, as conjectured in \cite{dean2019safely}. Here, we clarify that this may not be the case, even for the state-feedback framework of \cite{dean2019safely}. For instance, similar to the constraints \eqref{eq:quasiconvex_constraints1}-\eqref{eq:quasiconvex_constraints2} and those of \cite{dean2019safely}, consider the function $s:\mathbb{R}^2 \rightarrow \mathbb{R}$ defined as $s(x,y) = \frac{|y|x}{(1-x)}$. Fixing $x_1=1,x_2=0,y_1=-\frac{1}{2},y_2=\frac{1}{2}$, one can verify that $$s(\theta x_1+(1-\theta)x_2,\theta y_1 +(1-\theta)y_2) = |0.5-\theta| \frac{\theta}{1-\theta}\,,$$ is not quasiconvex for $\theta \in [0,1]$. Based on this reasoning, the cost of problem~\eqref{prob:quasi_convex_alpha}, and similarly the objective (2.3) in \cite{dean2019safely}, may not be quasiconvex in $\tau$. Hence, exhaustive search over $\tau$ remains the only solution in general. Table~\ref{table:convex_prop} summarizes the convexity properties of \eqref{prob:quasi_convex} and \eqref{prob:quasi_convex_alpha}.

\begin{table}
    \begin{center}
        \caption{Convexity properties for the proposed reformulations.}
        \label{tab:comparison_MLP}
        \small
        \begin{tabular}{c|c|c|c}
            &QC  in $\gamma$& QC in ($\tau$,$\gamma$) &C for fixed $(\gamma,\tau)$  \\
            \hline
            \eqref{prob:quasi_convex} & X &  X & $\checkmark$\\ 
            \hline
            \eqref{prob:quasi_convex_alpha} & $\checkmark$ & X&$\checkmark$ \\
            \hline
        \end{tabular}
            \label{table:convex_prop}
    \end{center}
\end{table}

\subsection{Safe exploration}
In many applications, it is desirable not only that the control policy synthesized from data is safe, but also that the system operates safely during the data-collection phase. In our setup, this amounts to requiring that the available trajectories $\mathbf{D1}$ verify \eqref{eq:constraints}. Similar to \cite{dean2019safely}, we now clarify that safe data can be collected  exploiting the feasible space of the optimization problem \eqref{prob:quasi_convex} as a corollary of Theorem~\ref{thm: robust formulation}.

More in details, assume that rough estimates $\hatbf{G}_{r},\hatbf{y}_{0,r}$ are given with possibly large errors $\epsilon_{\infty,r}$ and $\epsilon_{2,r}$. Note that it is inherently impossible to guarantee safe exploration unless some prior information is available. Consider now an exploration signal $\eta(t)$ such that $\norm{\eta(t)}_\infty\leq \eta_\infty$ for every $t \in \mathbb{Z}$ and define $\widetilde{w}_\infty = w_\infty+\eta_\infty$. Let $\mathbf{K}_r = \hatbm{\Phi}_{uy,r}\hatbm{\Phi}_{yy,r}^{-1}$ be any feasible solution to the instance of the optimization problem \eqref{prob:quasi_convex} where we use $\widetilde{w}_\infty$ in place of $w_\infty$. Then, by Theorem~\ref{thm: robust formulation}, the control policy
\begin{equation*}
    \mathbf{u} = \mathbf{K}_r\mathbf{y}+\bm{\eta}+\mathbf{w}\,,
\end{equation*}
can be applied to the real system during the exploration phase to generate safe trajectories. 

\section{Suboptimality Analysis}
\label{sec:suboptimality}
In this section, we tackle question Q2) in Section \ref{sec:prob_statement} about performance degradation as a function of the level of model-mismatch due to noisy data. We denote as $\starbf{K},\starbm{\Phi}$  the optimal controller for the real constrained problem \eqref{prob:IOP} and corresponding closed-loop responses.  Similarly, we denote as $\hatbf{K}^\star,\hatbm{\Phi}^\star$ the optimal controller for the optimization problem \eqref{prob:quasi_convex_alpha} and corresponding closed-loop responses. Further, we let $J^\star = J(\mathbf{G},\mathbf{K}^\star)$ and $\hat{J} = J(\mathbf{G},\hatbf{K}^\star)$. We aim to characterize the relative suboptimality gap  $\frac{\hat{J}^2-{J^\star}^2}{{J^\star}^2}$, and specifically we will show that
\begin{equation*}
    \frac{\hat{J}^2-{J^\star}^2}{{J^\star}^2} \leq \mathcal{O}\left(\epsilon_2\right) + \tilde{S}(\epsilon_\infty,\epsilon_2)\,,
\end{equation*}
 where $\tilde{S}(\epsilon_\infty,\epsilon_2) = S(\epsilon_\infty)(1+\mathcal{O}(\epsilon_2))$. Here, $S(\epsilon_\infty)$ quantifies the suboptimality incurred by tightening the constraints and is such that $S(0) = 0$. We prove that if $\epsilon_2$ and $\epsilon_\infty$ are small enough and the optimal controller $\mathbf{K}^\star$ does not activate the safety constraints, then $S(\epsilon_\infty) = 0$ and the suboptimality shrinks to $0$ linearly fast as $\epsilon_2$ converges to $0$. Otherwise, the gap may decrease according to $S(\epsilon_\infty)$, for which we provide a numerical plot in Section~\ref{sec:simulations}. In other words, for small estimation errors $\epsilon_2$ and $\epsilon_\infty$, applying controller $\hatbf{K}^\star$ (which is solely computed from noisy data) to the \emph{real} plant achieves almost optimal closed-loop performance while guaranteeing compliance with safety constraints. Surprisingly, despite the additional complexity of output-feedback and output noise, our bound matches the scaling with respect to $\epsilon = \max(\epsilon_2, \epsilon_\infty)$ that has been derived in \cite{dean2019safely} for the  state-feedback case without measurement noise. 

To prove the above statements, we first characterize a feasible solution to problem \eqref{prob:quasi_convex_alpha}, which we later exploit to establish our suboptimality bound. The proof of Lemma~\ref{le:feasible} and Theorem~\ref{th:suboptimality} is reported in the Appendices \ref{app:le:feasible} and \ref{app:th:suboptimality}, respectively.
\begin{lemma}[Feasible solution]
\label{le:feasible}
Let $\eta = \epsilon_2 \normtwo{\starbm{\Phi}_{uy}}$ and $\zeta = \epsilon_\infty \norm{\starbm{\Phi}_{uy}}_\infty $. Assume that the estimation errors are small enough to guarantee $\eta<\frac{1}{5}$ and $\zeta <\frac{1}{2}$,  and select $\alpha \in [\sqrt{2}\frac{\eta}{\epsilon_2(1-\eta)},\epsilon_2^{-1})$. 
Consider the following optimization problem and its optimal solutions $\bm{\Phi}^c$:
\begin{alignat}{3}
 \bm{\Phi}^c \in  & \argmin_{\bm{\Phi}} && \norm{\begin{bmatrix}
             \bm{\Phi}_{yy} & \bm{\Phi}_{yu} & \bm{\Phi}_{yy}\mathbf{y}_0\\\bm{\Phi}_{uy} & \bm{\Phi}_{uu} & \bm{\Phi}_{uy}\mathbf{y}_0
             \end{bmatrix}}_F \label{prob:doubly_robust}\\
&\st&&\begin{bmatrix}
             I&-\mathbf{G}
             \end{bmatrix}\bm{\Phi}=\begin{bmatrix}
             I&0
             \end{bmatrix}, \quad
             \bm{\Phi}
             \begin{bmatrix}
             -\mathbf{G}\\I
             \end{bmatrix}=\begin{bmatrix}
             0\\I
             \end{bmatrix},  \nonumber\\
        & && \norm{\bm{\Phi}_{uy}}_2 \leq \norm{\starbm{\Phi}_{uy}}_2\,, \norm{\bm{\Phi}_{uy}}_\infty \leq \norm{\starbm{\Phi}_{uy}}_\infty\,, \nonumber\\
        & && \phi_{1,j}(\bm{\Phi})+\phi_{2,j}(\bm{\Phi})+\phi_{3,j}(\bm{\Phi}) \leq \mathbf{b}_{y,j}\,, \label{eq:doublyrobust_constraints1}\\
     & && \phi_{4,j}(\bm{\Phi})+\phi_{5,j}(\bm{\Phi})+\phi_{6,j}(\bm{\Phi}) \leq \mathbf{b}_{u,j}\,,  \label{eq:doublyrobust_constraints2}\\
     & &&\forall j=1,\dots,sN\nonumber\,,\\
             & &&    \bm{\Phi}_{yy}, \bm{\Phi}_{yu}, \bm{\Phi}_{uy}, \bm{\Phi}_{uu} \text{\emph{ with causal sparsities}} \nonumber \,.
\end{alignat} 
where
\begin{align*}
    &\phi_{1,j}(\bm{\Phi}) = \frac{v_\infty \norm{F_{y,j} \bm{\Phi}_{yy}}^{\star}_1}{1-2\zeta}\,, \quad \phi_{4,j}(\bm{\Phi}) = \frac{v_\infty \norm{F_{u,j} \bm{\Phi}_{uy}}^{\star}_1}{1-2\zeta}\,,\\
    & \phi_{2,j}(\bm{\Phi}) = w_\infty\norm{\begin{bmatrix}\left(F_{y,j}\bm{\Phi}_{yu}\right)^\mathsf{T}\\2\frac{\epsilon_\infty+\zeta\norm{\hatbf{G}}_\infty}{1-2\zeta}\left(F_{y,j}\bm{\Phi}_{yy}\right)^\mathsf{T}\end{bmatrix}}_1\,,\\
    &\phi_{5,j}(\bm{\Phi}) = w_\infty\norm{\begin{bmatrix}\left(F_{u,j}\bm{\Phi}_{uu}\right)^\mathsf{T}\\2\frac{\epsilon_\infty+\zeta\norm{\hatbf{G}}_\infty}{1-2\zeta}\left(F_{u,j}\bm{\Phi}_{uy}\right)^\mathsf{T}\end{bmatrix}}_1\,,\\
    &\phi_{3,j}(\bm{\Phi}) = F_{y,j} \bm{\Phi}_{yy}\hatbf{y}_0 + 2\frac{\epsilon_\infty+\zeta\norm{\hatbf{y}_0}_\infty}{1-2\zeta}\norm{F_{y,j}\bm{\Phi}_{yy}}^{\star}_1\,,\\
    &\phi_{6,j}(\bm{\Phi}) = F_{u,j} \bm{\Phi}_{uy}\hatbf{y}_0 + 2\frac{\epsilon_\infty+\zeta\norm{\hatbf{y}_0}_\infty}{1-2\zeta}\norm{F_{u,j}\bm{\Phi}_{uy}}^{\star}_1\,.
\end{align*}

Then, the following expressions
    \begin{align}
        \tildebm{\Phi}_{yy} &= \bm{\Phi}^c_{yy} (I \hspace{-0.08cm}+ \hspace{-0.08cm}\mathbf{\Delta} \bm{\Phi}^c_{uy})^{-1},~ 
        \tildebm{\Phi}_{yu} = \bm{\Phi}^c_{yy} (I \hspace{-0.08cm}+\hspace{-0.08cm} \mathbf{\Delta} \bm{\Phi}^c_{uy})^{-1}(\mathbf{G} \hspace{-0.08cm}-\hspace{-0.08cm} \mathbf{\Delta}), \nonumber\\
        \tildebm{\Phi}_{uy} &= \bm{\Phi}^c_{uy} (I + \mathbf{\Delta} \bm{\Phi}^c_{uy})^{-1},~
        \tildebm{\Phi}_{uu} = (I +  \bm{\Phi}^c_{uy}\mathbf{\Delta})^{-1}\bm{\Phi}^c_{uu} , \nonumber \\
        \widetilde{\gamma} &= \frac{\sqrt{2}\eta}{\epsilon_2(1-\eta)}, \qquad \widetilde{\tau} = \frac{\zeta}{\epsilon_\infty(1-\zeta)}\,,   \label{eq:suboptimal}
    \end{align}
provide a feasible solution to problem \eqref{prob:quasi_convex_alpha}.
\end{lemma}

\vspace{0.2cm}
The main idea behind Lemma~\ref{le:feasible} is to construct a feasible solution to problem \eqref{prob:quasi_convex_alpha} from the set of closed-loop responses generated applying a cautious ground-truth optimal controller $\mathbf{K}^c = \bm{\Phi}_{uy}^c(\bm{\Phi}_{yy}^c)^{-1}$ on the estimated system $\hatbf{G}$. In the absence of safety constraints, such a feasible solution could directly be established from the ground-truth optimal policy $\starbf{K}$ similar to \cite{zheng2020sample}. In the constrained case, however, one cannot expect the optimal solution $\starbf{K}$ to be feasible for $\hatbf{G}$ in \eqref{prob:quasi_convex_alpha} since \eqref{eq:quasiconvex_constraints1}-\eqref{eq:quasiconvex_constraints2} are more stringent than 
\eqref{eq:ach_constraints1}-\eqref{eq:ach_constraints2}. Hence, in \eqref{prob:doubly_robust} we first compute $\mathbf{K}^c = \bm{\Phi}_{uy}^c(\bm{\Phi}_{yy}^c)^{-1}$ as the optimal linear policy for the real system $\mathbf{G}$ under safety constraints that are more stringent than those of \eqref{prob:quasi_convex_alpha}, and subsequently define $\tildebm{\Phi}$ as the closed-loop responses generated applying $\mathbf{K}^c$ to $\hatbf{G}$. In this way, $\tildebm{\Phi}$ is guaranteed to be feasible for \eqref{prob:quasi_convex_alpha}, provided that the model mismatch is sufficiently small.

Clearly, the optimal solution $\mathbf{K}^c = \bm{\Phi}_{uy}^c(\bm{\Phi}_{yy}^c)^{-1}$ to \eqref{prob:doubly_robust} will yield a suboptimal cost $J(\mathbf{G},\mathbf{K}^c)\geq J(\mathbf{G},\mathbf{K}^\star)$.  We denote the corresponding suboptimality gap as
\begin{equation}
\label{eq:sub_gap_S}
    S(\epsilon_\infty) = \frac{J(\mathbf{G},\mathbf{K}^c)^2-J(\mathbf{G,\mathbf{K}^\star})^2}{J(\mathbf{G,\mathbf{K}^\star})^2}\,.
\end{equation}
Note that, if  the estimation error $\epsilon_\infty$ is too large, the optimization problem \eqref{prob:doubly_robust} may become infeasible. This is expected as the uncertainty level might be incompatible with the required safety.  On the other hand, if the optimal solution to the non-noisy problem \eqref{prob:IOP} does not activate the safety constraints, then the constraints of \eqref{prob:doubly_robust} remain inactive for small enough $\epsilon_\infty$. In such case we have that $S(\epsilon_\infty) = 0$.

We are now ready to state the main suboptimality result.
\begin{theorem}
\label{th:suboptimality}
Let $\eta = \epsilon_2 \normtwo{\starbm{\Phi}_{uy}}$ and $\zeta = \epsilon_\infty \norm{\starbm{\Phi}_{uy}}_\infty $. Assume that the estimation errors are small enough to guarantee $\eta<\frac{1}{5}$ and $\zeta <\frac{1}{2}$,  and select $\alpha \in \left[\sqrt{2}\frac{\eta}{\epsilon_2(1-\eta)},5\norm{\starbm{\Phi}_{uy}}_2\right]$. Moreover, assume that $\epsilon_\infty $ is small enough for the optimization problem \eqref{prob:doubly_robust} to be feasible. Then, when applying the controller $\hatbf{K}^\star$ optimizing \eqref{prob:quasi_convex} to the true plant $\mathbf{G}$, the relative error with respect to the true optimal cost is upper bounded as
\begin{align}
        \frac{\hat{J}^2-{J^\star}^2}{{J^\star}^2} &\leq  20\eta  +4 (M^c+V^c)+ 4S(\epsilon_\infty)(1+M^c+V^c)\nonumber\\
       &= \mathcal{O}\left(\epsilon_2 \left(1+\norm{\starbm{\Phi}_{uy}}_2\right)\left(1+\norm{\mathbf{G}}_2+\norm{\mathbf{y}_0}_2\right)^2\right) +\nonumber\\
       &~~+4S(\epsilon_\infty)(1+M^c+V^c)\,,\label{eq:sub_gap_th}
\end{align}
where 
\begin{align*}
   & M^c = h(\epsilon_2,\alpha,\hatbf{G})+ h(\epsilon_2,\alpha,\hatbf{y}_0)+ h(\epsilon_2,\norm{\cbm{\Phi}_{uy}}_2,\mathbf{G})\\
   &\qquad + h(\epsilon_2,\norm{\cbm{\Phi}_{uy}}_2,\mathbf{y}_0)\,,\\
      & V^c = h(\epsilon_2,\alpha,\hatbf{y}_0)+h(\epsilon_2,\norm{\cbm{\Phi}_{uy}}_2,\mathbf{y}_0)\,.
\end{align*}
\end{theorem}

\vspace{0.2cm}

We have expressed the suboptimality gap in the form \eqref{eq:sub_gap_th} to highlight the presence of two main parts; the first addend scales as $\mathcal{O}\left(\epsilon_2 \left(1+\norm{\starbm{\Phi}_{uy}}_2\right)\left(1+\norm{\mathbf{G}}_2+\norm{\mathbf{y}_0}_2\right)^2\right)$ and the second addend $S(\epsilon_\infty)(1+M^c+V^c)$ is linked to the suboptimality of the tightened optimization program \eqref{prob:doubly_robust}.   The most important observation is that the suboptimality  decreases at most \emph{linearly} with $\epsilon_2$ when $\max{(\epsilon_2,\epsilon_\infty)}$ is small enough. A linear suboptimality rate in the output-feedback case has first been observed for the unconstrained setup of \cite{zheng2020sample}. Recovering a similar suboptimality rate for the general case with constraints is one of the main novelties of our work. Indeed, despite recovering an upper bound that scales similarly to \cite{zheng2020sample}, the corresponding analysis in Appendix~\ref{app:th:suboptimality} is significantly complicated by the fact that the feasible solution used in \cite{zheng2020sample} cannot be exploited anymore. Hence, one might expect that the suboptimality rate will worsen with respect to the unconstrained case of \cite{zheng2020sample}. Theorem~\ref{th:suboptimality} shows, however, that the bound does not deteriorate for small-enough model mismatch levels. Turning our attention to the term $S(\epsilon_\infty)$,  we observe through examples (cfr. Figure~\ref{fig:S_plot}) that $S(\epsilon_\infty)$ sharply transitions from $0$ to $\infty$ as $\epsilon_\infty$ increases. In practice, this example suggests that $S(\epsilon_\infty)$ might be interpreted as an indicator function; if $S(\epsilon_\infty)\approx 0$, then $\epsilon_\infty$ is small enough for the linear  suboptimality rate to hold.

Our suboptimality bound \eqref{eq:sub_gap_th} indicates features of the underlying unknown system that make it easier to be safely controlled based on noisy data. Notably, the suboptimality grows quadratically with the norm of the true impulse and free responses. This fact implies that an unknown unstable system will be more difficult to control for a long horizon. Last, we note that, surprisingly, our rate in terms of $\epsilon_2$ matches that of \cite{dean2019safely} which was valid under the assumption of exact state measurements. In other words, our analysis shows that near-optimality can be ensured in complex data-driven control scenarios that combine hard safety requirements with noisy output measurements. 

\section{Numerical Experiments}
\label{sec:simulations}
\begin{figure*}[t!]
    \centering
    \begin{subfigure}[t]{0.45\textwidth}
        \includegraphics[width = 1\linewidth]{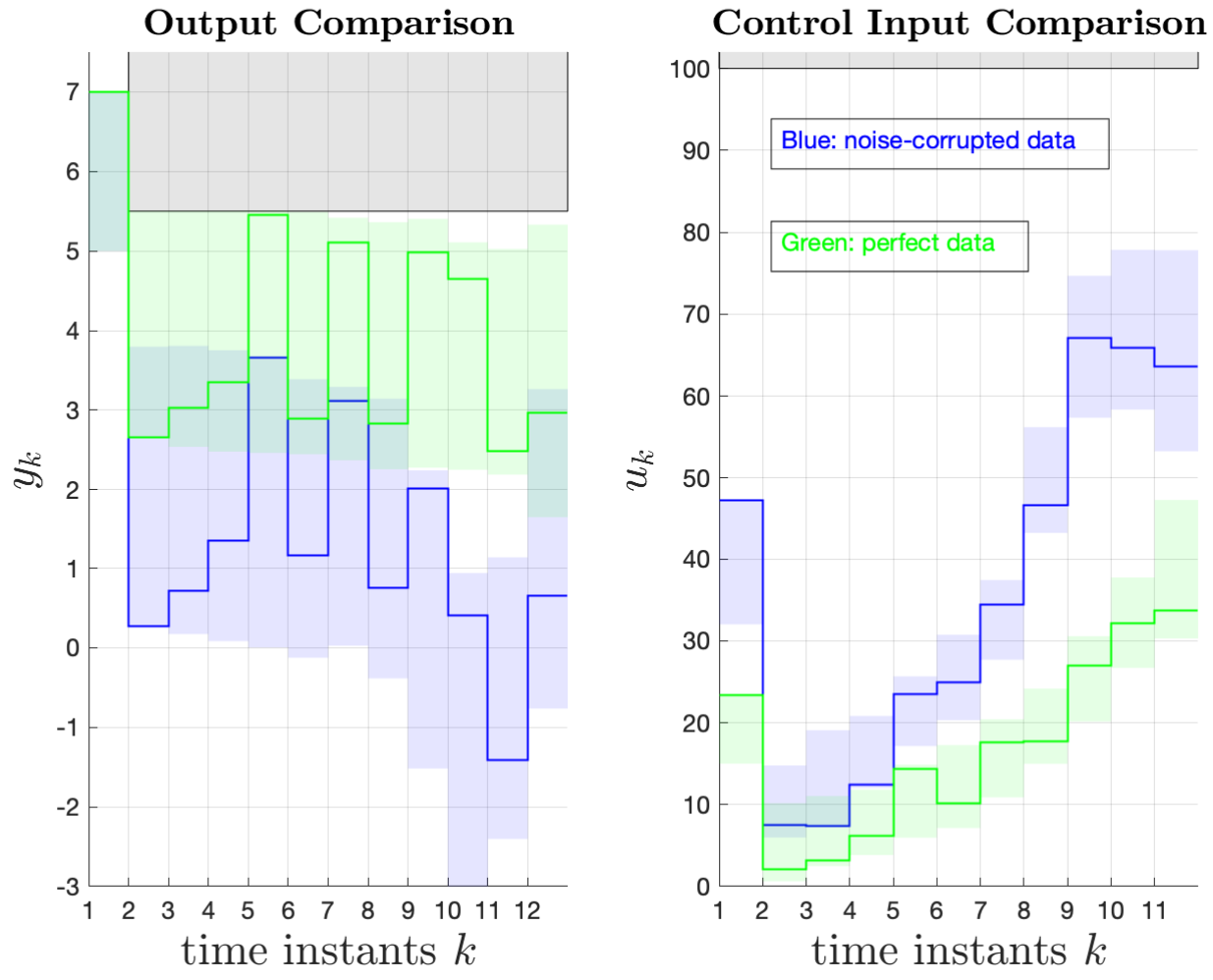}
        \caption{Closed-loop trajectories. The grey region indicates unsafe input and output values. The green and blue regions contain trajectories for $50$ noise realizations obtained through $\mathbf{K}^\star$ and $\hatbf{K}^\star$, respectively. Green and blue lines represent a specific trajectory in both settings.}
        \label{fig:constrained_traj}
    \end{subfigure}%
    ~ 
    \begin{subfigure}[t]{0.45\textwidth}
    \centering
        \includegraphics[width = .595\linewidth]{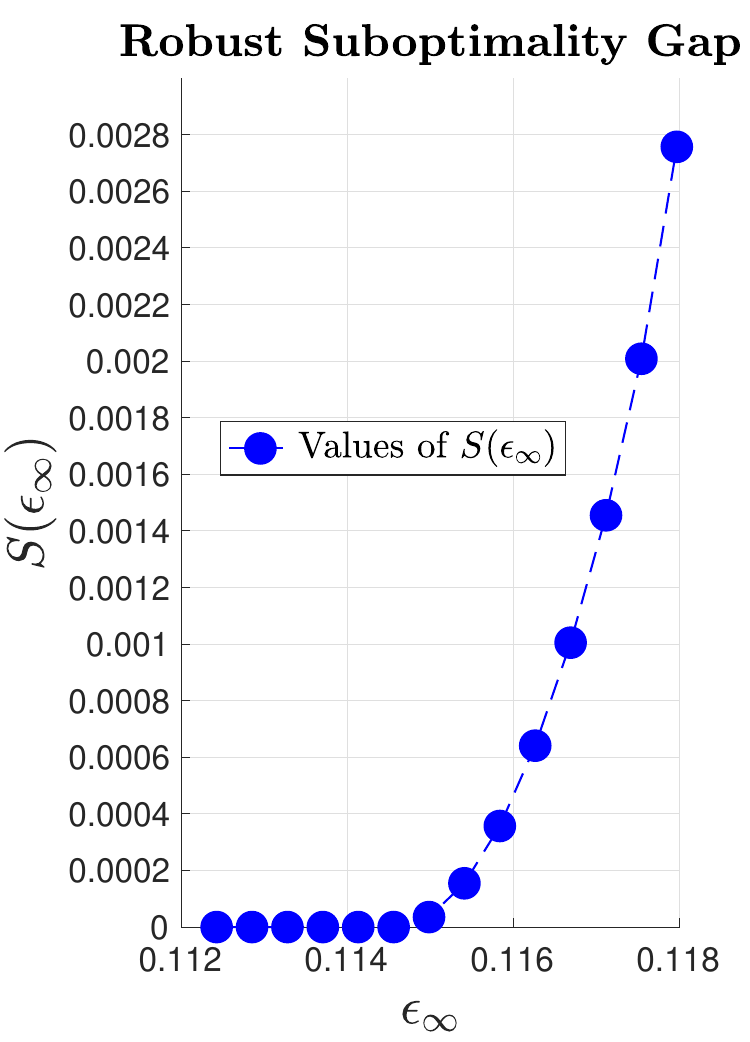}
        \caption{Robust suboptimality gap $S(\epsilon_\infty)$. This quantity can be interpreted as an indicator as to whether the guarantee \eqref{eq:sub_gap_th} holds for a given $\epsilon_\infty$.}
        \label{fig:S_plot}
    \end{subfigure}
    \caption{Safe controller synthesis for system \eqref{eq:system_example}.}
\end{figure*}
In this section, we demonstrate numerically the effectiveness of the proposed framework in safely controlling unknown systems. In the experiments, we consider the single-input single-output unknown LTI system characterized by the matrices
\begin{align}
&A = \rho \begin{bmatrix}
       1 &  0.25\\0 & 1
\end{bmatrix}, \text{ } B =  \begin{bmatrix}
      0 \\  0.1
\end{bmatrix},\text{ } C =  \begin{bmatrix}
       1 & -1
\end{bmatrix}\,, \label{eq:system_example}
\end{align}
where $\rho>0$ corresponds to the spectral radius of $A$. When $\rho<1$, \eqref{eq:system_example} is asymptotically stable, that is, its output converges to the origin at an exponential rate when the input is equal to $0$. When $\rho = 1$, \eqref{eq:system_example} is a marginally stable double-integrator system. 

In all the following tests, the cost function is given by \eqref{eq:cost_output} for appropriate choices of the weights. The expectation in \eqref{eq:cost_output} is taken over future input/output disturbances with  covariance matrices $\Sigma_w = I_m$ and $\Sigma_v = I_p $. We consider bounded disturbances between $-1$ and $1$, that is, $w_\infty = v_\infty = 1$. Hence, each scalar disturbance is randomly chosen from $\{-1,1\}$ with probability $\frac{1}{2}$.   For solving optimization problems we use MOSEK \cite{mosek}, called through MATLAB via YALMIP \cite{YALMIP}\footnote{The code is open-source and available at \url{https://gitlab.nccr-automation.ch/data-driven-control-epfl/constrained-biop}. This example takes a few minutes overall to run.}.

\subsection{Example: safe controller synthesis from noisy data}
In our first test, we synthesize  a safe output-feedback controller for system \eqref{eq:system_example} with $\rho = 1$ from noisy data. We assume that  $x_0 = x(1) = \begin{bmatrix}6&0\end{bmatrix}^\mathsf{T}$, where we set the initial time at $t=1$ rather than $t=0$ for compliance with MATLAB's indexing of vector entries.

The safety constraints are: $y(1) \in \mathbb{R}$ and
\begin{align*}
    -5.5&\leq y(t) \leq 5.5,\quad \forall t = 2,\dots,12\,, \\
    -100&\leq u(t)\leq 100\,, \quad \forall t = 1,\dots,11\,,
\end{align*}
for all realizations of noise $\|\mathbf{w}\|_\infty,\|\mathbf{v}\|_\infty\leq 1$, while minimizing the cost \eqref{eq:cost_output} with the weights $\mathbf{Q}$ and $\mathbf{R}$ in \eqref{prob:IOP} set to the identity. 

We first synthesize the optimal controller assuming that the available data are not affected by noise. To this end, we cast and solve the convex optimization problem \eqref{prob:IOP_data}. We verify that the optimal controller $\mathbf{K}^\star$ yields a cost $J(\mathbf{G},\mathbf{K}^\star) = 69.88$. 
The green tubes in Figure~\ref{fig:constrained_traj} show the regions containing 50 realizations of the optimal closed-loop input and output trajectories. Due to the high level of noise, we can observe a significant variability in the trajectory values for different noise realizations. Nonetheless, all trajectories are safe.

We then discuss the case where the available data are affected by noise. In order for the tightened constraints of \eqref{prob:quasi_convex_alpha} to be feasible, we consider noisy estimates $(\hatbf{G},\hatbf{y}_0)$ with $\epsilon = 0.01$ and compute a near-optimal solution to the proposed optimization problem \eqref{prob:quasi_convex}. As discussed in Section~\ref{sub:proposed_reformulation}, this can be achieved by 1) extensive or random search over $\gamma$ and $\tau$, or 2) extensive search over $\tau$ and golden-section search over $\gamma$. Even if the first solution comes without strong theoretical guarantees, extensive search over $\gamma$ and $\tau$ may be simpler to implement as it avoids the delicate task of tuning the parameter $\alpha$. Specifically, for this example we have searched over $100$ randomly extracted values of $\gamma$ and $\tau$ in the interval $[0,\epsilon^{-1})$. A potential improvement to this heuristic could be to use a bisection algorithm, as proposed in \cite{chen2020robust} for example.

Proceeding as above, we synthesize a robustly safe controller $\hatbf{K}^\star$ yielding a cost of $J(\mathbf{G},\hatbf{K}^\star) =   140.54$. The corresponding suboptimality gap is $\frac{\hat{J}^2-{J^\star}^2}{{J^\star}^2} = 3.049$. In Figure~\ref{fig:constrained_traj}, the trajectories and variability levels resulting from $\hatbf{K}^\star$ for $50$ noise realizations are plotted in blue. We observe that, since $\hatbf{K}^\star$ is synthesized using noise-corrupted data, it leads to safer, but more conservative trajectories. Indeed, due to uncertainty, higher control effort is spent to keep the output further from the constraints. 

It is informative to inspect the robust suboptimality gap $S(\epsilon_\infty)$ incurred by the tightened optimization problem \eqref{prob:doubly_robust} that we have used in the analysis to characterize a feasible solution to \eqref{prob:quasi_convex_alpha}. In Figure~\ref{fig:S_plot}, we plot $S(\epsilon_\infty)$ assuming $x_0 = x(1) = \begin{bmatrix}1&0\end{bmatrix}^\mathsf{T}$ and requiring $-3\leq y(t)\leq 3$ for $t = 1,\dots,7$. The example exhibits a fast transition from infeasibility for $\epsilon_\infty>0.118$ to near-optimality for $\epsilon_\infty<0.115$.  This fact leads to the following observation: high-performing safe controllers can be synthesized by solving  \eqref{prob:quasi_convex_alpha} even when the optimization problem \eqref{prob:doubly_robust} is infeasible, i.e. $S(\epsilon_\infty) = \infty$. In such cases the suboptimality bound \eqref{eq:sub_gap_th} is not applicable, but a robustly safe controller has been synthesized nonetheless. This phenomenon is consistent with the numerical examples of \cite{dean2019safely} for the state-feedback case. 

\begin{figure*}[t!]
    \centering
    \begin{subfigure}[t]{0.468\textwidth}
        \includegraphics[width = 1.02\linewidth]{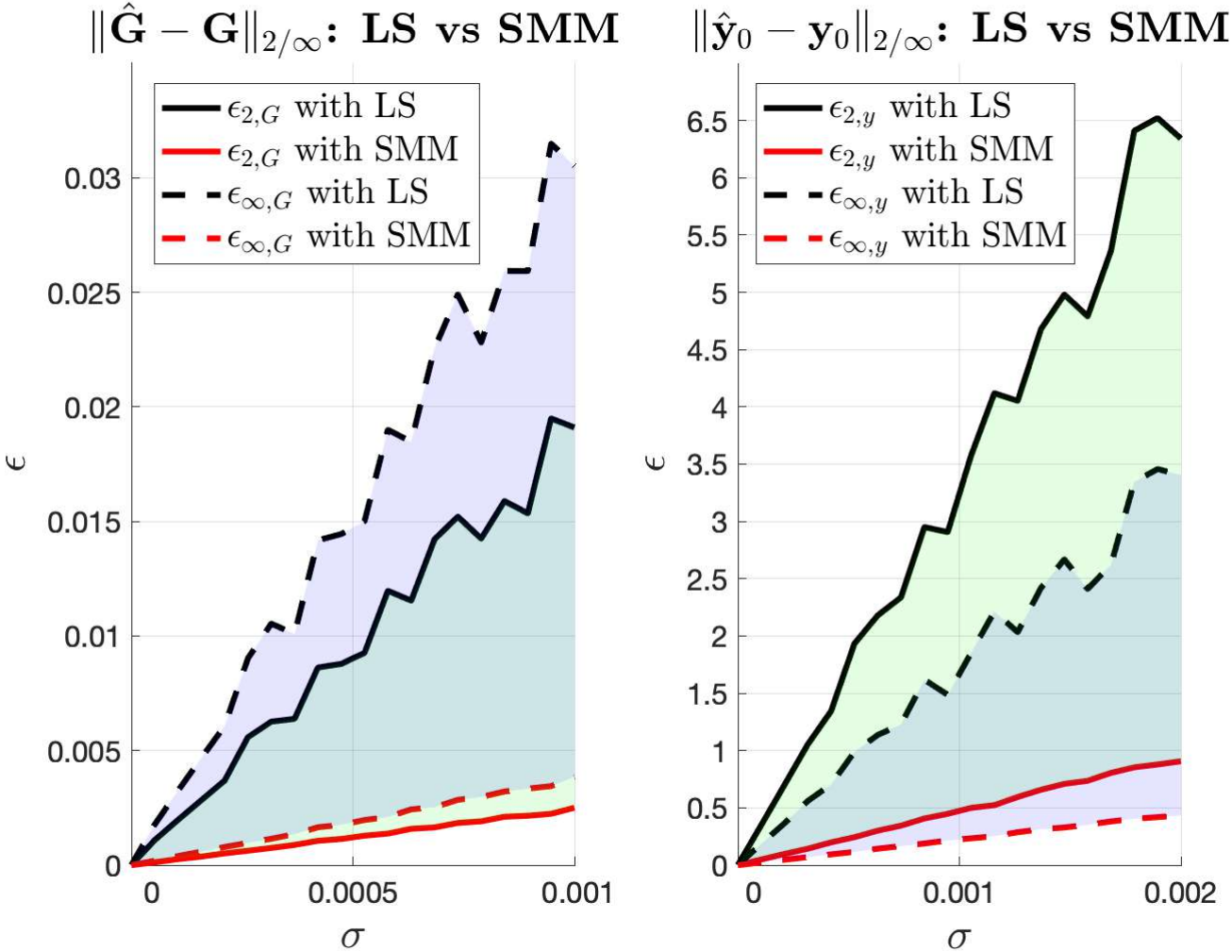}
        \caption{Estimation error in function of the corrupting noise. ML estimation through the SMM yields significantly smaller errors than LS. The green and blue regions indicate the gap for the $2$-norm and $\infty$-norm, respectively.}
        \label{fig:SMM_gap}
    \end{subfigure}%
    ~ 
    \begin{subfigure}[t]{0.45\textwidth}
    \centering
        \includegraphics[width = 1.02\linewidth]{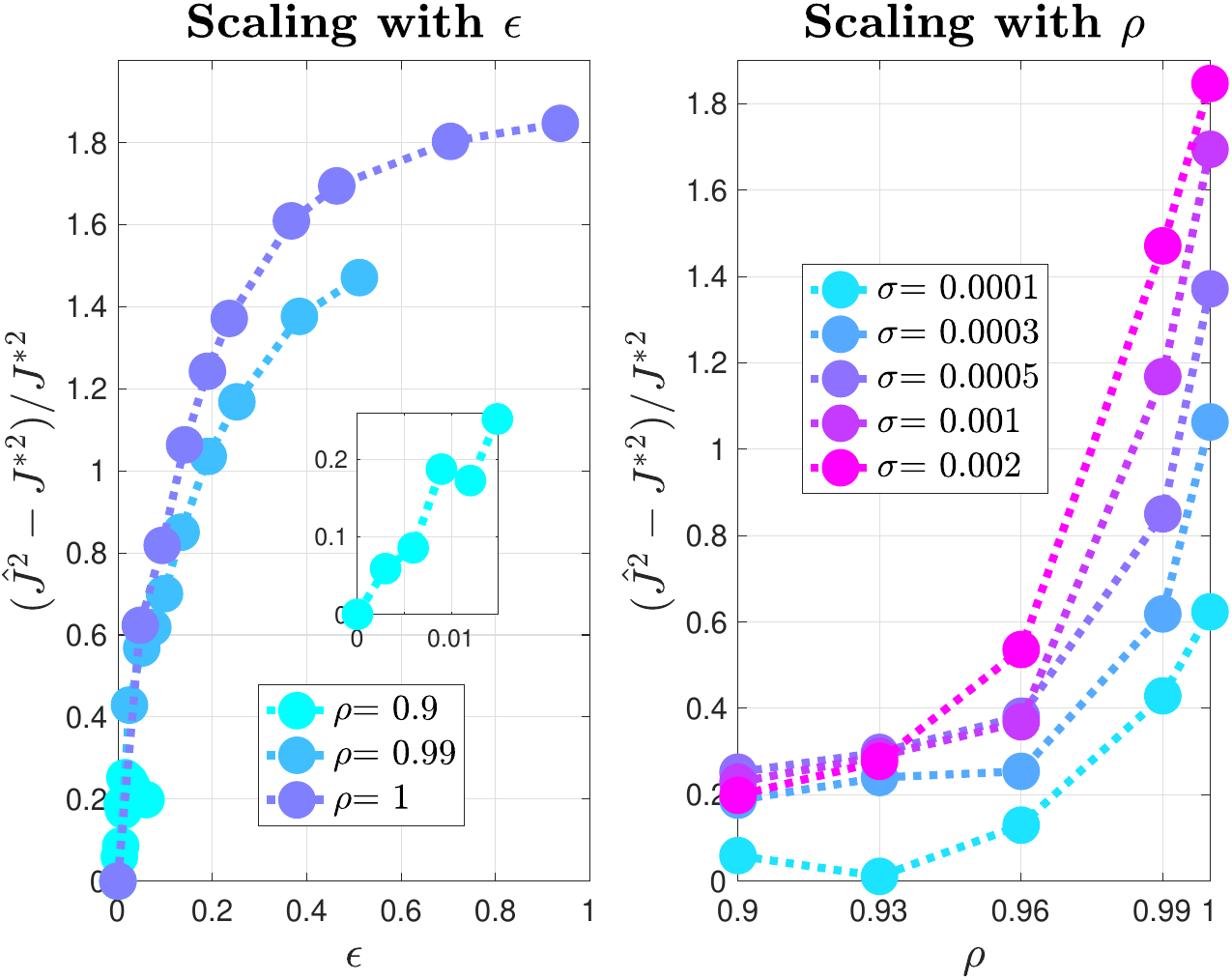}
        \caption{ Suboptimality gap as a function of $\epsilon_2$ (obtained through SMM estimation) for increasing values of the spectral radius $\rho$ of matrix $A$ (on the left). Suboptimality gap as a function of $\rho$ for increasing values of $\sigma$ (on the right).}
        \label{fig:Suboptimality}
    \end{subfigure}
    \caption{Examples for estimation and suboptimality scaling.}
\end{figure*}

\subsection{Example: suboptimality scaling beyond least-squares estimation}

The bound \eqref{eq:sub_gap_th} in Theorem~\ref{th:suboptimality} states that a low estimation error level $\epsilon$ is crucial in ensuring safety and near-optimality when controlling unknown systems based on noisy data. One advantage of the proposed formulation is that it is directly compatible with behavioral estimation approaches beyond LS identification for the reconstruction of the impulse and free responses, such as data-enabled Kalman filtering \cite{alpago2020extended} and SMM  \cite{yin2021maximum,iannelli2020experiment}. In our last test, we  drop the constraints for both the input and the outputs thus putting our focus on 1) validating the linear scaling of the suboptimality gap \eqref{eq:sub_gap_th}, and 2) showcasing  that, for instance, SMM-based estimation \cite{iannelli2020experiment} may lead to significantly lower error levels given the same amount of data. We consider the system \eqref{eq:system_example} with different values of $\rho\in [0.9,0.93,0.96,0.99,1]$ and $x_0 = x(1) = \begin{bmatrix}6&0\end{bmatrix}^\mathsf{T}$, over a time-horizon of length $N= 11$.  The cost function weights in \eqref{prob:IOP} are selected as $Q(t) = I_p$ for every $t = 1,\dots,11$, $Q(12) = 20I_p$ and $R(t)=0.05$ for every $t = 1,\dots, 11$. 
 
\subsubsection{Behavioral estimation: LS vs SMM }
For a fixed value of $\rho$, we gather system trajectories of length $200$ time-steps which are corrupted by input and output Gaussian noise with covariance matrices equal to $\sigma I$. For each experiment, we fix the variance $\sigma \geq 0$ and select a random exploration control input $\mathbf{u}$. We collect $1000$ different trajectories for different realizations of the corrupting noise. For each realization of the trajectories, we compute 1) the LS solution $(G_{LS},g_{LS})$ using \eqref{eq:G_LS} and the corresponding impulse and free responses $\widetilde{\mathbf{G}}_{LS},\widetilde{\mathbf{y}}_{0,LS}$, and 2) the ML solution $(G_{ML},g_{ML})$ using  \eqref{eq:G_ML}-\eqref{eq:g_ML} and the corresponding impulse and free responses $\widetilde{\mathbf{G}}_{ML},\widetilde{\mathbf{y}}_{0,ML}$. For each estimation, we determine the incurred error levels $\epsilon_{2,G}$, $\epsilon_{\infty,G}$, $\epsilon_{2,y}$ and $\epsilon_{\infty,y}$\footnote{Since the real system is unavailable, in practice this can be done using a bootstrap procedure.}. Last, we record the $90$-th percentile of these values, both for SMM and LS estimation.
 
  In Figure~\ref{fig:SMM_gap} we compare the values of $\epsilon_2$ and $\epsilon_\infty$ incurred by both estimation techniques. We observe that SMM may yield significantly smaller estimation errors than LS identification. While a full sample-complexity analysis is still unavailable beyond least-squares \cite{dean2019sample,oymak2019non,xue2021data}, these examples showcase an advantage in using more sophisticated estimation techniques for safe data-driven control.

\subsubsection{Suboptimality scaling}
Having exploited ML estimation to construct approximate impulse and free responses and the corresponding error-levels, we are ready to solve the optimization problem \eqref{prob:quasi_convex_alpha}. Since constraints are not present in this example, \eqref{prob:quasi_convex_alpha} can be simplified to the quasiconvex formulation we have proposed in \cite{furieri2021behavioral}, where the optimization variable $\tau$ is not present. The parameter $\alpha$ is tuned empirically  in the interval $\alpha \in [\sqrt{2}\frac{\eta}{\epsilon_2(1-\eta)},\epsilon_2^{-1})$.\footnote{The value $\eta=\epsilon_2 \norm{\starbm{\Phi}_{uy}}_2$ is unknown in practice because $\starbm{\Phi}_{uy}$ is unavailable. One can then tune $\alpha$ according to $\alpha < \epsilon_2^{-1}$.}

Figure~\ref{fig:Suboptimality} shows the suboptimality gap one incurs by applying the controller $\hatbf{K}^\star$ obtained through the proposed approach. On the left,  we consider increasing levels of the estimation error level $\epsilon_2$ for each choice of the spectral radius $\rho= 0.9,0.99,1$. On the right, we conversely consider increasing levels of the spectral radius for each choice of the estimation error level $\epsilon_2$. In both cases, we plot the suboptimality gap $\frac{\hat{J}^2-{J^\star}^2}{{J^\star}^2}$. It can be observed that, consistently with Theorem~\ref{th:suboptimality}, 1) the gap linearly converges to $0$ as $\epsilon_2$ converges to $0$, and 2) the gap may grow faster than linearly with the spectral radius $\rho$ as a larger $\rho$ generally leads to larger $\norm{\mathbf{G}}_2$. We also observe that larger $\rho$ may lead to higher model mismatch values $\epsilon$. Finally, we remark that, in finite-horizon, our formulations are valid for unstable systems with $\rho>1$. However,  it is inherently challenging to collect trajectories  of an unstable system, as the values to be plugged into the corresponding optimization problems will become too large to be handled by numerical solvers.  For unstable systems in a data-driven scenario, it is common to assume knowledge of a pre-stabilizing controller \cite{simchowitz2020improper,zheng2020sample}.

\section{Conclusions}
\label{sec:conclusions}
In this paper, we have analyzed how much the model-mismatch due to noisy data can impact the safety and performance of output-feedback control systems with constraints. By deriving a suitable problem relaxation, we have proven that, despite the presence of constraints, the suboptimality of our proposed problem relaxation increases at most linearly for small model mismatches incurred during system identification.

While the proposed approach can synthesize safe and near-optimal output-feedback controllers from noisy data, our relaxed problem might be infeasible for fairly small error levels. Feasibility issues may be significantly mitigated by using soft and chance constraints, or ellipsoidal model mismatch sets such as those of \cite{abbasi2011regret, yin2021data}. Future work also includes analyzing the suboptimality in a receding-horizon setup using closed-loop predictions, as well as investigating the advantages of directly optimizing based on the data trajectories rather than performing an identification step.

\section*{Acknowledgments}
We thank Sarah Dean for sharing the implementation of the examples in \cite{dean2019safely} and for helpful insights on quasiconvexity. We also thank the anonymous reviewers for the  suggestions for improvement, as well as several new interesting insights.

\bibliographystyle{IEEEtran}
\bibliography{references}

\appendix
\allowdisplaybreaks

\subsection{Willems' lemma and behavioral theory for synthesizing safe controllers}
\label{app:Willems}

We recall the definition of persistency of excitation and the result known as the \emph{Fundamental Lemma} for LTI systems \cite{willems2005note}.
\begin{definition}
We say that $\mathbf{u}^h_{[0,T-1]}$ is \emph{persistently exciting} (PE) of order $L$ if the Hankel matrix $\mathcal{H}_L(\mathbf{u}^h_{[0,T-1]})$ is full row-rank.
\end{definition}
A necessary condition for the matrix $\mathcal{H}_{L}(\mathbf{u}^h_{[0,T-1]})$ to be full row-rank is that it has at least as many columns as rows. It follows that the input trajectory $\mathbf{u}^h_{[0,T-1]}$  must be long enough to satisfy $T\geq (m+1)L-1$.
\begin{lemma}[Theorem 3.7, \cite{willems2005note}]
\label{le:Willems}
Consider system \eqref{eq:dynamic}. Assume that $(A,B)$ is controllable and that there is no noise. Let $\{\mathbf{y}^h_{[0,T-1]},\mathbf{u}^h_{[0,T-1]}\}$ be a system trajectory of length $T$ that has been recorded during a past experiment. Then, if $\mathbf{u}^h_{[0,T-1]}$ is PE of order $n+L$, the signals $\mathbf{y}^\star_{[0,L-1]}\in \mathbb{R}^{pL}$ and $\mathbf{u}^\star_{[0,L-1]} \in \mathbb{R}^{mL}$ are trajectories of \eqref{eq:dynamic} if and only if there exists $g \in \mathbb{R}^{T-L+1}$ such that
\begin{equation}
    \begin{bmatrix}
           \mathcal{H}_{L}(\mathbf{y}^h_{[0,T-1]})\\
           \mathcal{H}_{L}(\mathbf{u}^h_{[0,T-1]})
    \end{bmatrix} g = \begin{bmatrix}\mathbf{y}^\star_{[0,L-1]}\\\mathbf{u}^\star_{[0,L-1]}\end{bmatrix}\,. \label{eq:Willems}
\end{equation}
\end{lemma}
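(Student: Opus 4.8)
The statement is the classical Fundamental Lemma, so the plan is to reproduce the standard argument (Willems--Rapisarda--Markovsky--De Moor) organized around a dimension count of the finite-horizon behavior
$$\mathcal{B}_L := \{(\mathbf{y}_{[0,L-1]},\mathbf{u}_{[0,L-1]}) : \text{noise-free trajectory of }\eqref{eq:dynamic}\} \subseteq \mathbb{R}^{(p+m)L},$$
which is a \emph{linear} subspace. Without loss of generality I would take $(A,B,C)$ minimal (otherwise read $n$ as the McMillan degree), so that $\dim\mathcal{B}_L = mL+n$ once $L$ exceeds the lag of the system --- which is not restrictive here since $\mathbf{u}^h$ is PE of order $n+L$. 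For the \textbf{``if''} direction, I would observe that every column of the mosaic Hankel matrix $H:=\begin{bmatrix}\mathcal{H}_L(\mathbf{y}^h)\\ \mathcal{H}_L(\mathbf{u}^h)\end{bmatrix}$ is a length-$L$ window $\{\omega^h(i),\dots,\omega^h(i+L-1)\}$ of the recorded trajectory, hence by time-invariance lies in $\mathcal{B}_L$; since $\mathcal{B}_L$ is a subspace, $Hg\in\mathcal{B}_L$ for every $g$, i.e. $(\mathbf{y}^\star,\mathbf{u}^\star)$ is a trajectory. Persistency of excitation is not needed here.

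For the \textbf{``only if''} direction, it suffices to show that the column span of $H$ is \emph{all} of $\mathcal{B}_L$: then every length-$L$ trajectory, in particular $(\mathbf{y}^\star_{[0,L-1]},\mathbf{u}^\star_{[0,L-1]})$, lies in the image of $H$ and a suitable $g$ exists. The ``if'' argument already gives $\operatorname{col}(H)\subseteq\mathcal{B}_L$, so the task reduces to proving $\operatorname{rank} H = \dim\mathcal{B}_L = mL+n$. Letting $X^h=[x^h(0)\ \cdots\ x^h(T-L)]$ be the (unmeasured) state sequence of the experiment and propagating \eqref{eq:dynamic} over length-$L$ windows yields the factorization
$$H=\begin{bmatrix}\mathcal{O}_L & \mathcal{T}_L\\ 0 & I_{mL}\end{bmatrix}\begin{bmatrix}X^h\\ \mathcal{H}_L(\mathbf{u}^h)\end{bmatrix},$$
with $\mathcal{O}_L$ the extended observability matrix of depth $L$ and $\mathcal{T}_L$ the block-Toeplitz matrix of Markov parameters. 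Observability of $(C,A)$ makes the left factor injective (full column rank $n+mL$), hence $\operatorname{rank} H=\operatorname{rank}\begin{bmatrix}X^h\\ \mathcal{H}_L(\mathbf{u}^h)\end{bmatrix}$, and everything comes down to showing this stacked matrix has full row rank $n+mL$.

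The core step is therefore the following kernel argument. Suppose $\begin{bmatrix}a^\mathsf{T}& b^\mathsf{T}\end{bmatrix}$ annihilates $\begin{bmatrix}X^h\\ \mathcal{H}_L(\mathbf{u}^h)\end{bmatrix}$ from the left, with $a\in\mathbb{R}^n$ and $b=(b_0,\dots,b_{L-1})\in\mathbb{R}^{mL}$. Reading this column-wise gives $a^\mathsf{T}x^h(k)+\sum_{i=0}^{L-1}b_i^\mathsf{T}u^h(k+i)=0$ for all admissible $k$. Using $x^h(k+1)=Ax^h(k)+Bu^h(k)$ to shift this identity and eliminate the state via the Cayley--Hamilton theorem, one obtains a \emph{fixed} linear relation among $n+L$ consecutive samples of $\mathbf{u}^h$; persistency of excitation of order $n+L$ forces the coefficient vector of this relation to vanish, and controllability of $(A,B)$ then propagates this back to $a=0$ and $b=0$. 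Hence $\begin{bmatrix}X^h\\ \mathcal{H}_L(\mathbf{u}^h)\end{bmatrix}$ has full row rank $n+mL$, which concludes the proof.

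I expect this core step to be the main obstacle: converting the annihilation identity into a genuine order-$(n+L)$ dependence among the input windows and invoking controllability at exactly the right place is where all the ingredients (state propagation, controllability, and the ``$+n$'' in the required PE order) are actually consumed, whereas the rest is subspace bookkeeping plus a rank factorization. A secondary subtlety worth flagging in the write-up is the role of observability: the lemma as quoted assumes only that $(A,B)$ is controllable, so one should either additionally assume $(A,B,C)$ minimal or interpret $n$ as the McMillan degree, since otherwise both $\dim\mathcal{B}_L$ and the rank count have to be corrected accordingly.
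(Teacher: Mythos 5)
There is nothing to compare against here: the paper does not prove this statement, it imports it verbatim as Theorem~3.7 of \cite{willems2005note} (Willems' Fundamental Lemma) and uses it as a black box. Your argument is the standard state-space proof of that result and is sound in outline: the ``if'' direction is indeed pure subspace bookkeeping, and the ``only if'' direction correctly reduces, via the factorization
\begin{equation*}
\begin{bmatrix}\mathcal{H}_L(\mathbf{y}^h)\\ \mathcal{H}_L(\mathbf{u}^h)\end{bmatrix}
=\begin{bmatrix}\mathcal{O}_L & \mathcal{T}_L\\ 0 & I_{mL}\end{bmatrix}
\begin{bmatrix}X^h\\ \mathcal{H}_L(\mathbf{u}^h)\end{bmatrix},
\end{equation*}
to the full row rank of the state--input data matrix, which is exactly the kernel/Cayley--Hamilton argument you sketch (when unpacked, the vanishing of the length-$(n+L)$ annihilator forces $b=0$ by a triangular elimination using only PE, and then yields $a^{\mathsf T}\begin{bmatrix}B & AB & \cdots & A^{n-1}B\end{bmatrix}=0$, so controllability gives $a=0$; your phrasing that controllability ``propagates back to $a=0$ and $b=0$'' is thus slightly coarser than what happens, but not wrong). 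Two small remarks. First, your justification that $L$ exceeding the lag ``is not restrictive since $\mathbf{u}^h$ is PE of order $n+L$'' is a non sequitur: the PE order says nothing about the lag. The cleaner fix is to note that the dimension count works without any condition on $L$ if you write $\dim\mathcal{B}_L = mL+\operatorname{rank}\mathcal{O}_L$ and observe that the left factor above has rank $mL+\operatorname{rank}\mathcal{O}_L$ as well, so $\operatorname{rank}H=\dim\mathcal{B}_L$ follows from full row rank of $\bigl[\,X^h;\ \mathcal{H}_L(\mathbf{u}^h)\,\bigr]$ alone; in particular observability is not actually needed for the equivalence \eqref{eq:Willems}. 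Second, your flag about minimality is nonetheless apt in spirit: in \cite{willems2005note} the integer $n$ is the state dimension of a minimal representation, and in the paper's setting observability is anyway imposed separately (Assumption~1), so reading $n$ as the minimal state dimension is the faithful interpretation of the PE order $n+L$.
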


\vspace{0.2cm}

We proceed by showing  how Lemma~\ref{le:Willems} allows one to derive a data-driven formulation of \eqref{prob:IOP} when the data are not noisy. We work under the following assumptions that are standard in the behavioral framework.

\begin{assumption}
\label{ass:1}
The data-generating LTI system \eqref{eq:dynamic} is such that $(A,B)$ is controllable and $(A,C)$ is observable.
\end{assumption}

\begin{assumption}
\label{ass:4}
The historical input trajectory $\mathbf{u}^h_{[0,\tilde{T}-1]}$ is PE of order $n+T_{ini}+N$, where $T_{ini}\geq l$ and $l$ is the smallest integer such that the matrix \begin{equation*}\begin{bmatrix}C^\mathsf{T}&(CA)^\mathsf{T}&\dots &(CA^{l-1})^\mathsf{T}\end{bmatrix}^\mathsf{T}\,,\end{equation*} has full row-rank. Note that if Assumption~\ref{ass:1} holds, then $l \leq n$.
\end{assumption}

Further, we give the following definition.
\begin{definition}
\label{ass:2}
The available data in $\mathbf{D1}$ are further split as follows:
    \begin{enumerate}
    \item[$i)$] a \emph{recent} system trajectory of length $T_{ini}$: $\left\{\mathbf{y}^r_{[0,T_{ini}-1]},\mathbf{u}^r_{[0,T_{ini}-1]}\right\}$, with $\mathbf{y}^r_{[0,T_{ini}-1]} \hspace{-0.0175cm} = \hspace{-0.0175cm} \mathbf{y}_{[-T_{ini},-1]}$ and $\mathbf{u}^r_{[0,T_{ini}-1]} = \mathbf{u}_{[-T_{ini},-1]}$,
        \item[$ii)$] a \emph{historical} system trajectory of length $\tilde{T}$: $        \left\{\mathbf{y}^h_{[0,\tilde{T}-1]},\mathbf{u}^h_{[0,\tilde{T}-1]}\right\}$, with $\mathbf{y}^h_{[0,\tilde{T}-1]} = \mathbf{y}_{[-T_{h},-T_{h}+\tilde{T}-1]}$ and $\mathbf{u}^h_{[0,\tilde{T}-1]} = \mathbf{u}_{[-T_{h},-T_h+\tilde{T}-1]}$ for $T_h \in \mathbb{N}$ such that $T_h \geq \tilde{T}$ and $\tilde{T}\leq T$.
    \end{enumerate}
\end{definition}

\vspace{0.2cm}

The \emph{historical} data are to be used in substitution of the system model, while the \emph{recent} data  reflect the system initial state $x_0 \in \mathbb{R}^n $ \cite{markovsky2008data}. By exploiting \eqref{eq:Willems}, one can derive a constrained version of the BIOP derived in \cite{furieri2021behavioral} as follows:
\begin{proposition}[Safe Behavioral IOP]
\label{th:BIOP}
Consider the LTI system \eqref{eq:dynamic}, whose parameters $(A,B,C,x_0)$ are \emph{unknown}, and let Assumptions~\ref{ass:1}-\ref{ass:4} hold. Further assume that the historical and recent trajectories are \emph{not} affected by noise. Let $(G,g)$ be any solutions to the linear system of equations
\begin{equation}
\label{eq:BM}
    \begin{bmatrix}U_p\\ Y_p\\U_f\end{bmatrix}
    \hspace{-0.1cm}
    \begin{bmatrix} G&g \end{bmatrix}
    \hspace{-0.06cm}
    =
    \hspace{-0.06cm}
    \begin{bmatrix}0_{mT_{ini} \times m}&\mathbf{u}^r_{[0,T_{ini}-1]}\\0_{pT_{ini} \times m} & \mathbf{y}^r_{[0,T_{ini}-1]}\\\begin{bmatrix}I_m&0_{ m \times m(N-1)} \end{bmatrix}^\mathsf{T} &0_{mN \times 1}\end{bmatrix}\hspace{-0.1cm},
\end{equation}
where $\begin{bmatrix}U_p\\U_f\end{bmatrix} = \mathcal{H}_{T_{ini}+N}(\mathbf{u}^h_{[0,\tilde{T}-1]})$ and   $\begin{bmatrix}Y_p\\Y_f\end{bmatrix} = \mathcal{H}_{T_{ini}+N}(\mathbf{y}^h_{[0,\tilde{T}-1]})$. Then, the optimization problem~\eqref{prob:IOP} is equivalent to 
\begin{align}
    &\min_{\bm{\Phi}}\frobenius{\begin{bmatrix}\mathbf{Q}^{\frac{1}{2}}&0\\0&\mathbf{R}^{\frac{1}{2}}\end{bmatrix}
    \hspace{-0.05cm}
    \begin{bmatrix}
            \bm{\Phi}_{yy} & \bm{\Phi}_{yu} \\
            \bm{\Phi}_{uy} & \bm{\Phi}_{uu} 
        \end{bmatrix}
        \hspace{-0.05cm}
        \begin{bmatrix}\bm{\Sigma}^{\frac{1}{2}}_v&0&Y_fg\\0&\bm{\Sigma}^{\frac{1}{2}}_w&0\end{bmatrix}}^2\label{prob:IOP_data}\\
    &\st \begin{bmatrix} I & -\operatorname{Toep}(Y_fG) \end{bmatrix}\bm{\Phi} = \begin{bmatrix} I & 0 \end{bmatrix}\,,\nonumber \\
    &~~~~\qquad \quad\bm{\Phi}\begin{bmatrix} -\operatorname{Toep}(Y_fG) \\I \end{bmatrix} = \begin{bmatrix} 0 \\ I\end{bmatrix} \,,\nonumber\\
           &\qquad\quad \norm{\begin{bmatrix}v_\infty\left(F_{y,j} \bm{\Phi}_{yy}\right)^\mathsf{T}\\w_\infty\left(F_{y,j} \bm{\Phi}_{yu}\right)^\mathsf{T}\end{bmatrix}^{\mathsf{T}}}_1+  \left(F_{y,j} \bm{\Phi}_{yy}\right)Y_fg \leq \mathbf{b}_{y,j}\,, \nonumber\\
            &\qquad\quad\norm{\begin{bmatrix}v_\infty\left(F_{u,j}\bm{\Phi}_{uy}\right)^\mathsf{T}\\w_\infty\left(F_{u,j} \bm{\Phi}_{uu}\right)^\mathsf{T} \end{bmatrix}^{\mathsf{T}}}_1 + \left(F_{u,j} \bm{\Phi}_{uy}\right)Y_fg \leq \mathbf{b}_{u,j} \,,\nonumber \\
            &\qquad \quad \forall j=1,\dots,sN\,, \nonumber\\
                   &\qquad\quad\bm{\Phi}_{yy}, \bm{\Phi}_{yu}, 
            \bm{\Phi}_{uy}, \bm{\Phi}_{uu}  \text{ with causal sparsities.}\nonumber
\end{align}
\end{proposition}
The proof of Proposition~\ref{th:BIOP} is analogous to that of Theorem~1 in \cite{furieri2021behavioral}, with the addition of the safety constraints as per Proposition~\ref{prop:IOP}. Since the historical and recent data are not noisy, $Y_fG$ and $Y_fg$ yield the \emph{true} impulse response matrix $\mathbf{G}$ and free response $\mathbf{y}_0$ and the optimal solution of \eqref{prob:IOP_data} recovers the optimal safe controller $\mathbf{K}^\star$ for the real system. 

In practice, exact historical and recent data are not available. As per the noise model in the dynamics \eqref{eq:dynamic}-\eqref{eq:input}, one may assume that historical and recent trajectories are affected by additive noise $w^h(t),w^r(t),v^h(t),v^r(t)$\footnote{where ``$w$'' and ``$v$'' denote input and output noise, respectively, and the apices $r$ and $h$ denote recent data and historical data, respectively.} at all time instants, with zero expected values and variances $\bm{\Sigma}^h_w,\bm{\Sigma}^r_w,\bm{\Sigma}^h_v,\bm{\Sigma}^r_v$ respectively. Hence, the matrix on the left-hand-side of \eqref{eq:BM} becomes full row-rank almost surely, and any solution to \eqref{eq:BM} leads to potentially different estimates of the system free and impulse responses, which do not necessarily match the exact ones. \color{black} 
This issue is well-known in the behavioral theory literature, and several mitigation strategies have recently been proposed \cite{coulson2019data,coulson2021distributionally,de2019formulas,yin2021maximum,alpago2020extended,iannelli2020experiment}. For instance, a behavioral LS estimator akin to  the impulse-response identification of \cite{oymak2019non,zheng2020sample} is given by
\begin{equation}
\footnotesize
   \begin{bmatrix}G_{LS}&g_{LS}\end{bmatrix} \hspace{-0.1cm}=\hspace{-0.1cm} \begin{bmatrix}\hat{U}_p\\ \hat{Y}_p\\\hat{U}_f\end{bmatrix}^{+}\hspace{-0.1cm}
    \begin{bmatrix}0_{mT_{ini} \times m}& \mathbf{u}^r_{[0,T_{ini}-1]}\\
    0_{pT_{ini} \times m}& \mathbf{y}^r_{[0,T_{ini}-1]}\\
    \begin{bmatrix}I_m&0_{ m \times m(N-1)} \end{bmatrix}^\mathsf{T} & 0_{mN \times 1}
    \end{bmatrix}, \label{eq:G_LS}
\end{equation}
while the ML estimator \cite{yin2021maximum} is computed as
{\footnotesize
\begin{alignat}{3}
    G_{ML}=&\argmin_{G} &&-log\left[p\left(\begin{bmatrix}\Xi_y\\Y_fG\end{bmatrix}|~G,Y_f\right)\right]  \label{eq:G_ML}\\
    &\st&&\begin{bmatrix}\hat{U}_p\\ \hat{U}_f\end{bmatrix}
    G=\begin{bmatrix}
    0_{mT_{ini} \times m} \nonumber\\
    \begin{bmatrix}I_m&0_{ m \times m(N-1)} \end{bmatrix}^\mathsf{T}
    \end{bmatrix}\,,\\
    g_{ML}=&  \argmin_{g} && -log\left[p\left(\begin{bmatrix}\xi_y\\Y_fg\end{bmatrix}|~g,Y_f\right)\right] \label{eq:g_ML}\\
    &\st&&\begin{bmatrix}\hat{U}_p\\ \hat{U}_f\end{bmatrix}g=\begin{bmatrix}\mathbf{u}^r_{[0,T_{ini}-1]}\\0_{mN \times 1}\end{bmatrix}\,, \nonumber
\end{alignat}
}%
where the residuals $\Xi_y =(Y_p-\hat{Y}_p)G $ and $\xi_y =(Y_p-\hat{Y}_p)g$ denote the fitting deviation from the most recent output measurements, and $p(a|b)$ indicates the probability of event $a$ conditioned to $b$.

\subsection{Proof of Proposition~\ref{prop:IOP}}
\label{app:prop:IOP}
    For the first statement, notice that the controller $\mathbf{K}$ achieves the closed-loop responses \eqref{eq:CL_responses}. Now select $(\bm{\Phi}_{yy}, \bm{\Phi}_{yu}, \bm{\Phi}_{uy}, \bm{\Phi}_{uu})$ as
    \begin{equation}
    \begin{bmatrix}
           \bm{\Phi}_{yy} & \bm{\Phi}_{yu} \\
            \bm{\Phi}_{uy} & \bm{\Phi}_{uu} 
        \end{bmatrix}=\begin{bmatrix}(I-\mathbf{GK})^{-1} & (I-\mathbf{GK})^{-1}\mathbf{G}\\ \mathbf{K}(I-\mathbf{GK})^{-1} & (I-\mathbf{KG})^{-1}\end{bmatrix}\,,
        \label{eq:CL_responses_proofs}
    \end{equation}
  and $\mathbf{q} = \bm{\Phi}_{uu} \mathbf{g}$. Clearly, $\mathbf{K} = \bm{\Phi}_{uy}\bm{\Phi}_{yy}^{-1}$ and $\mathbf{g}= \bm{\Phi}_{uu}^{-1}\mathbf{q}$, and by plugging the corresponding expressions, we verify that \eqref{eq:ach} and \eqref{eq:ach3} are satisfied. It remains to prove that \eqref{eq:ach_constraints1}-\eqref{eq:ach_constraints2} are satisfied. In \eqref{eq:compact_constraints}, substitute $\mathbf{y}$ and $\mathbf{u}$ with their closed-loop expressions \eqref{eq:IOP_parameters}. It follows that the  addends separately depend on $\mathbf{w}$ or $\mathbf{v}$. Hence, \eqref{eq:compact_constraints} can be rewritten as
  {
   \begin{align}
       &\max_{\norm{\mathbf{v}}_\infty \leq v_\infty} \left(\mathbf{F}_y\bm{\Phi}_{yy}\right)\mathbf{v} + \max_{\norm{\mathbf{w}}_\infty \leq w_\infty}\left(\mathbf{F}_y\bm{\Phi}_{yu}\right)\mathbf{w}+\nonumber\\&+\mathbf{F}_y\mathbf{Gq}+\left(\mathbf{F}_y\bm{\Phi}_{yy}\right)\mathbf{CP}_A(:,0)x_0 \leq \mathbf{b}_y\,,    \label{eq:compact_constraints_separated1}
       \end{align}
       \begin{align}
       &\max_{\norm{\mathbf{v}}_\infty \leq v_\infty} \left(\mathbf{F}_u\bm{\Phi}_{uy}\right)\mathbf{v} + \max_{\norm{\mathbf{w}}_\infty \leq w_\infty} \left(\mathbf{F}_u\bm{\Phi}_{uu}\right)\mathbf{w}+\nonumber\\
       &+\mathbf{F}_u\mathbf{q}+\left(\mathbf{F}_y\bm{\Phi}_{uy}\right)\mathbf{CP}_A(:,0)x_0 \leq \mathbf{b}_u\,,\label{eq:compact_constraints_separated2}
   \end{align}
   }%
   where the $\max(\cdot)$ is to be intended row-wise. The expressions \eqref{eq:compact_constraints_separated1}-\eqref{eq:compact_constraints_separated2} are already convex in $\bm{\Phi},\mathbf{q}$. To have a more explicit expression, similar to \cite{dean2019safely} we utilize the well-known property  that the $\norm{\cdot}_1$ and the $\norm{\cdot}_\infty$ vector norms are dual of each other \cite{boyd2004convex}, that is $k\norm{x}_1 = \max_{\norm{w}_\infty \leq k} x^\mathsf{T} w$. The result follows immediately by inspecting \eqref{eq:compact_constraints_separated1}-\eqref{eq:compact_constraints_separated2} and letting $x^\mathsf{T}$ be equal to either $F_{y,j}\bm{\Phi}_{yy}$, $F_{y,j}\bm{\Phi}_{uy}$, $F_{y,j}\bm{\Phi}_{yu}$ or $F_{y,j}\bm{\Phi}_{uu}$, and letting $k$ be equal to either $v_\infty$ or $w_\infty$.
   
   For the second statement, it is easy to notice  that $\mathbf{K}$ is causal by construction because $\bm{\Phi}_{uy}$ and $\bm{\Phi}_{yy}$ are block lower-triangular. 
   By selecting the controller $\mathbf{K} = \bm{\Phi}_{uy}\bm{\Phi}_{yy}^{-1}$ one has 
   {\small
   \begin{align*}
    (I&-\mathbf{G}\bm{\Phi}_{uy}\bm{\Phi}_{yy}^{-1})^{-1} = (I-\mathbf{G}\bm{\Phi}_{uy}(I+\mathbf{G}\bm{\Phi}_{uy})^{-1})^{-1}\\
    &= ((I+\mathbf{G}\bm{\Phi}_{uy}-\mathbf{G}\bm{\Phi}_{uy})(I+\mathbf{G}\bm{\Phi}_{uy})^{-1})^{-1}\\
    &= I +\mathbf{G}\bm{\Phi}_{uy} = \bm{\Phi}_{yy}\,,
    \end{align*}
    }which shows that $\bm{\Phi}_{yy}$ is the closed-loop response from $\mathbf{v}_{[0,N-1]}+\mathbf{CP}_A(:,0)x_0$ to $\mathbf{y}_{[0,N-1]}$ as per \eqref{eq:CL_responses}. Similar computations hold for the remaining closed-loop responses. For the safety constraints, select any $\bm{\Phi}$ and $\mathbf{q}$ complying with \eqref{eq:ach_constraints1}-\eqref{eq:ach_constraints2}. It is easy to verify by direct computation that, for any $\mathbf{w}$ and $\mathbf{v}$, the same input and output trajectories defined at \eqref{eq:IOP_parameters} are obtained by letting $\mathbf{K} = \bm{\Phi}_{uy}\bm{\Phi}_{yy}^{-1}$ and $\mathbf{g} = \bm{\Phi}_{uu}^{-1}\mathbf{q}$ in \eqref{eq:state_compact}, \eqref{eq:output_compact}, \eqref{eq:control_policy}. Hence, the safety constraints are satisfied for any disturbance realization.

\subsection{Proof of Proposition~\ref{pr:robust_IOP}}
\label{app:prop:robust_IOP}
We first prove that $\mathbf{K} \in \bm{\mathcal{K}} \implies \hatbm{\Phi} \in \bm{\Pi}$, where {
\begin{equation}
\label{eq:choice_Phi}
 \hatbm{\Phi}:=\begin{bmatrix}(I-\hatbf{G}\mathbf{K})^{-1} & (I-\hatbf{G}\mathbf{K})^{-1}\hatbf{G}\\ \mathbf{K}(I-\hatbf{G}\mathbf{K})^{-1} & (I-\mathbf{K}\hatbf{G})^{-1}\end{bmatrix}\,.
\end{equation}}%
Let us fix $(\bm{\Delta},\bm{\delta}_0) \in \mathcal{E}$. By substitution of  $\hatbm{\Phi}$ inside the blocks of $\bm{\Phi}$ defined in the proposition statement, one has 
${\bm{\Phi}=\begin{bmatrix}(I-(\hatbf{G}+\bm{\Delta})\mathbf{K})^{-1} & (I-(\hatbf{G}+\bm{\Delta})\mathbf{K})^{-1}(\hatbf{G}+\bm{\Delta})\\ \mathbf{K}(I-(\hatbf{G}+\bm{\Delta})\mathbf{K})^{-1} & (I-\mathbf{K}(\hatbf{G}+\bm{\Delta}))^{-1}\end{bmatrix}}$. From \eqref{eq:IOP_parameters}-\eqref{eq:CL_responses} the closed-loop trajectories obtained by applying $\mathbf{K}$ to the system $\hatbf{G}+\bm{\Delta}$ are given by 
{
\begin{equation*}
 \begin{bmatrix}\mathbf{y}(\mathbf{K,\bm{\theta}})\\\mathbf{u}(\mathbf{K,\bm{\theta}})\end{bmatrix}=\begin{bmatrix}
           \bm{\Phi}_{yy} & \bm{\Phi}_{yu} \\
            \bm{\Phi}_{uy} & \bm{\Phi}_{uu} 
        \end{bmatrix}\begin{bmatrix}\mathbf{v}+\hatbf{y}_0+\bm{\delta}_0\\ \mathbf{w}\end{bmatrix}\,.
\end{equation*}
}%
Proceeding as in the proof of Proposition~\ref{prop:IOP}, one can show that ``$(\mathbf{y}(\mathbf{K,\bm{\theta}}),\mathbf{u}(\mathbf{K,\bm{\theta}})) \in \bm{\Gamma}$'' for every $\bm{\theta} \in \bm{\mathcal{E}} \times \bm{\mathcal{W}} \times \bm{\mathcal{V}}$ is the same as ``\eqref{eq:constraints_delta1}-\eqref{eq:constraints_delta2}'' for every $(\bm{\Delta},\bm{\delta}_0) \in \bm{\mathcal{E}}$. Since \eqref{eq:constraints_hatPhi1} and \eqref{eq:constraints_hatPhi3} are verified by construction, the proof is concluded. Further, for any $(\bm{\Delta},\bm{\delta}_0)$, the cost of \eqref{prob:robust_deltas} achieved by $\mathbf{K}$ is identical to the cost of \eqref{prob:robust_IOP} achieved by $\hatbm{\Phi}$ as proven in Proposition~\ref{prop:strongly_convex}.

Next, we show $\hatbm{\Phi} \in \bm{\Pi}\implies \hatbf{K} \in \bm{\mathcal{K}}$, where $\hatbf{K} := \hatbm{\Phi}_{uy}\hatbm{\Phi}_{yy}^{-1}$. Using \eqref{eq:constraints_hatPhi1}, one can verify that
{
\begin{equation*}
    \bm{\Phi}_{yy} := \hatbm{\Phi}_{yy}(I-\bm{\Delta}\hatbm{\Phi}_{uy})^{-1} = (I-(\hatbf{G}+\bm{\Delta})\hatbf{K})^{-1}\,,
\end{equation*}
}%
and similarly, that all other equalities in \eqref{eq:choice_Phi} hold by substituting $\hatbf{\Phi}$ with $\mathbf{\Phi}$ and $\mathbf{K}$ with $\hatbf{K}$.  Then
{
\begin{equation*}
 \begin{bmatrix}\mathbf{y}(\hatbf{K},\bm{\theta})\\\mathbf{u}(\hatbf{K},\bm{\theta})\end{bmatrix}=\begin{bmatrix}
           \bm{\Phi}_{yy} & \bm{\Phi}_{yu} \\
            \bm{\Phi}_{uy} & \bm{\Phi}_{uu} 
        \end{bmatrix}\begin{bmatrix}\mathbf{v}+\hatbf{y}_0+\bm{\delta}_0\\ \mathbf{w}\end{bmatrix}\,.
\end{equation*}
}%
But ``\eqref{eq:constraints_delta1}-\eqref{eq:constraints_delta2}'' for every $(\bm{\Delta},\bm{\delta}_0) \in \bm{\mathcal{E}}$, which hold by definition, imply that  $(\mathbf{y}(\hatbf{K},\bm{\theta}),\mathbf{u}(\hatbf{K},\bm{\theta})) \in \bm{\Gamma}$ for every $\bm{\theta} \in \bm{\mathcal{E}} \times \bm{\mathcal{W}} \times \bm{\mathcal{V}}$ (see the proof of Proposition~\ref{prop:IOP}). Further, for any $(\bm{\Delta},\bm{\delta}_0)$, the cost of \eqref{prob:robust_IOP}  achieved by $\hatbm{\Phi}$ is identical to the cost of \eqref{prob:robust_deltas} achieved by $\hatbf{K}$ as proven in Proposition~\ref{prop:strongly_convex}.

\subsection{Proof of Lemma~\ref{le:upperbound}}
\label{app:le:upperbound}
The objective function in Proposition~\ref{pr:robust_IOP} can be written as the square-root of the sum of the square of the Frobenius norms of each of its six blocks. For the upper-left block,  since $\norm{\hat{\bm{\Phi}}_{u y}}_2 \leq$ $\gamma< \epsilon_2^{-1}$ by assumption, \color{black}we have
{\small
\begin{align*}
    &\|\hatbm{\Phi}_{yy}(I-\mathbf{\Delta}\hatbm{\Phi}_{uy})^{-1}\|_F \leq \|\hatbm{\Phi}_{yy}\|_F \norm{\sum_{k=0}^\infty (\bm{\Delta} \hatbm{\Phi}_{uy})^k}_2 \\&\leq \|\hatbm{\Phi}_{yy}\|_F \sum_{k=0}^\infty \norm{\epsilon_2 \hatbm{\Phi}_{uy}}^k_2 = \|\hatbm{\Phi}_{yy}\|_F\left(1-\epsilon_2 \|\hatbm{\Phi}_{uy}\|_2\right)^{-1}\,,
\end{align*}
}
where the convergence of the series follows from $\bm{\Delta}$ and $\hatbm{\Phi}_{uy}$ having zero-entries diagonal blocks by construction. Similarly,
{\small
\begin{align*}
    &\|\hatbm{\Phi}_{uy}(I-\mathbf{\Delta}\hatbm{\Phi}_{uy})^{-1}\|_F \leq \|\hatbm{\Phi}_{uy}\|_F\left(1-\epsilon_2 \|\hatbm{\Phi}_{uy}\|_2\right)^{-1}\,,\\
    &\|(I-\hatbm{\Phi}_{uy}\mathbf{\Delta})^{-1}\hatbm{\Phi}_{uu}\|_F \leq \|\hatbm{\Phi}_{uu}\|_F\left(1-\epsilon_2 \|\hatbm{\Phi}_{uy}\|_2\right)^{-1}\,.
\end{align*}
}
Next, we have
{\small
\begin{align*}
       &\|\hatbm{\Phi}_{yy}(I-\mathbf{\Delta}\hatbm{\Phi}_{uy})^{-1}(\hatbf{G}+\mathbf{\Delta})\|_F
       \\
       &\leq \|\hatbm{\Phi}_{yy}\hatbf{G}\|_F \text{+}  \|\hatbm{\Phi}_{yy}\mathbf{\Delta}\|_F \text{+} \left\|\hatbm{\Phi}_{yy}\left(\sum_{k=1}^\infty(\mathbf{\Delta}\hatbm{\Phi}_{uy})^k\right) (\hatbf{G}+\mathbf{\Delta})\right\|_F \\
       & \leq \|\hatbm{\Phi}_{yu}\|_F + \epsilon_2 \|\hatbm{\Phi}_{yy}\|_F + \|\hatbm{\Phi}_{yy}\|_F \frac{\epsilon_2 \|\hatbm{\Phi}_{uy}\|_2(\|\hatbf{G}\|_2 + \epsilon_2)}{1 - \epsilon_2 \|\hatbm{\Phi}_{uy}\|_2}\\
       & \leq  \frac{\|\hatbm{\Phi}_{yu}\|_F + \epsilon_2 \|\hatbm{\Phi}_{yy}\|_F (2 + \|\hatbm{\Phi}_{uy}\|_2\|\hatbf{G}\|_2 )}{1 - \epsilon_2 \|\hatbm{\Phi}_{uy}\|_2}\,,
\end{align*}
}
and therefore, by developing the squares and  using that  $\norm{\hatbm{\Phi}_{yy}\hatbf{G}}_F\leq \|\hatbm{\Phi}_{yy}\|_F\|\hatbf{G}\|_2$ we obtain
{\small
\begin{align*}
        &\quad \|\hatbm{\Phi}_{yy}(I-\mathbf{\Delta}\hatbm{\Phi}_{uy})^{-1}(\hatbf{G}+\mathbf{\Delta})\|_F^2\\
       & \leq \frac{\left(\|\hatbm{\Phi}_{yu}\|_F^2
        + \|\hatbm{\Phi}_{yy}\|_F^2 h(\epsilon_2,\gamma,\hatbf{G})\right)}{(1-\epsilon_2 \|\hatbm{\Phi}_{uy}\|_2)^2}\,.
\end{align*}
}
Proceeding analogously, one can also prove that
{\small
\begin{align*}
    & \|\hatbm{\Phi}_{yy}(I-\mathbf{\Delta}\hatbm{\Phi}_{uy})^{-1}(\hatbm{y}_0+\bm{\delta}_0)\|_F^2 \\
    &\leq\frac{1}{(1-\epsilon_2 \|\hatbm{\Phi}_{uy}\|_2)^2}\left(\|\hatbm{\Phi}_{yy} \hatbf{y}_0\|_F^2 + \|\hatbm{\Phi}_{yy}\|_F^2 h(\epsilon_2,\gamma,\hatbf{y}_0)\right) \,,\\
    & \|\hatbm{\Phi}_{uy}(I-\mathbf{\Delta}\hatbm{\Phi}_{uy})^{-1}(\hatbm{y}_0+\bm{\delta}_0)\|_F^2 \leq\\
    &\frac{1}{(1-\epsilon_2 \|\hatbm{\Phi}_{uy}\|_2)^2}\left(\|\hatbm{\Phi}_{uy} \hatbf{y}_0\|_F^2 + \|\hatbm{\Phi}_{uy}\|_F^2 h(\epsilon_2,\gamma,\hatbf{y}_0)\right) \,.
\end{align*}
}Therefore, combining the above inequalities we finally conclude that
{\small
\begin{align*}
    &J(\mathbf{G},\mathbf{K})
    \leq \frac{\Big(\norm{\begin{bmatrix}
              \hatbm{\Phi}_{yy} & \hatbm{\Phi}_{yu} & \hatbm{\Phi}_{yy}\hatbf{y}_0\\\hatbm{\Phi}_{uy} & \hatbm{\Phi}_{uu} & \hatbm{\Phi}_{uy}\hatbf{y}_0
             \end{bmatrix}}_F^2 +
    \|\hatbm{\Phi}_{yy}\|_F^2 (h(\epsilon_2,\gamma,\hatbf{G})+}{1-\epsilon_2 \|\hatbm{\Phi}_{uy}\|_2}\\& \qquad \frac{+
    h(\epsilon_2,\gamma,\hatbf{y}_0)) + 
    \|\hatbm{\Phi}_{uy}\|_F^2 h(\epsilon_2,\gamma,\hatbf{y}_0)\Big)^{\frac{1}{2}}}{1-\epsilon_2 \|\hatbm{\Phi}_{uy}\|_2}\,.
\end{align*}
}

\subsection{Proof of Lemma~\ref{le:quasiconvex_safetyconstraints}}
\label{app:le:quasiconvex_safetyconstraints}
By using the fact that for $x \in \mathbb{R}^n$ and $y\in \mathbb{R}^m$ we have that $\norm{\begin{bmatrix}x^\mathsf{T}&y^\mathsf{T}\end{bmatrix}}_1 = \norm{x^{\mathsf{T}}}_1 + \norm{y^{\mathsf{T}}}_1$, the left-hand-sides of \eqref{eq:constraints_delta1}-\eqref{eq:constraints_delta2} are each made of three addends. The proof hinges on upper bounding each one of them for a generic $(\bm{\Delta,\bm{\delta}_0}) \in\bm{\mathcal{E}}$. We report the full derivations for the most informative of them.  Exploiting Holder's inequality and the relation $\left\|I-\bm{\Delta} \hatbm{\Phi}_{u y}\right\|_{\infty} \leq \frac{1}{1-\epsilon_\infty \tau}$, which can be derived by proceeding as in the proof of Lemma~\ref{le:upperbound}, we have
{
\begin{align*}
    &v_\infty\norm{F_{y,j} \hatbm{\Phi}_{yy}(I-\mathbf{\Delta}\hatbm{\Phi}_{uy})^{-1}}^{\star}_1\\ \leq
    & v_\infty \norm{F_{y,j} \hatbm{\Phi}_{yy}}^{\star}_1 \norm{(I-\bm{\Delta} \hatbm{\Phi}_{uy})^{-1}}_\infty\leq \frac{v_\infty \norm{F_{y,j} \hatbm{\Phi}_{yy}}^{\star}_1}{1-\epsilon_\infty \tau}\,,
\end{align*}
}
\color{black}
which is equal to $f_{1,j}(\hatbm{\Phi})$. Next, recalling $\hatbm{\Phi}_{yu} = \hatbm{\Phi}_{yy} \hatbf{G}$,
{\small
\begin{align*}
   & w_\infty\norm{F_{y,j} \hatbm{\Phi}_{yy}(I-\mathbf{\Delta}\hatbm{\Phi}_{uy})^{-1}(\hatbf{G}+\mathbf{\Delta})}^{\star}_1\\
   & \leq w_\infty  \norm{F_{y,j}\hatbm{\Phi}_{yu}}^{\star}_1 + \max_{\norm{\mathbf{w}}_\infty \leq w_\infty}|F_{y,j}\hatbm{\Phi}_{yy}\bm{\Delta}\mathbf{w}| +\\
   &~~~+\max_{\norm{\mathbf{w}}_\infty \leq w_\infty} \lvert
   F_{y,j}\hatbm{\Phi}_{yy}\bm{\Delta}\hatbm{\Phi}_{uy}(I-\bm{\Delta}\hatbm{\Phi}_{uy})^{-1}(\hatbf{G}+\bm{\Delta})\mathbf{w}
   \rvert\\
    &\leq w_\infty  \norm{F_{y,j}\hatbm{\Phi}_{yu}}^{\star}_1 + 
    w_\infty \epsilon_\infty \norm{F_{y,j} \hatbm{\Phi}_{yy}}^{\star}_1
    + \\
    &~~~+w_\infty \epsilon_\infty \norm{F_{y,j} \hatbm{\Phi}_{yy}}^{\star}_1 \norm{\hatbm{\Phi}_{uy}(I-\bm{\Delta}\hatbm{\Phi}_{uy})^{-1}(\hatbf{G}+\bm{\Delta})}_\infty\\
   & \leq w_\infty  \norm{F_{y,j}\hatbm{\Phi}_{yu}}^{\star}_1 \hspace{-0.2cm}\text{+} w_\infty \epsilon_\infty \norm{F_{y,j} \hatbm{\Phi}_{yy}}^{\star}_1\hspace{-0.1cm}\left(\hspace{-0.1cm}1 \text{+} \tau\frac{ \norm{\hatbf{G}}_\infty+\epsilon_\infty }{1-\epsilon_\infty \tau}\right)\hspace{-0.1cm}\\
   &=w_\infty  \norm{F_{y,j}\hatbm{\Phi}_{yu}}^{\star}_1 \hspace{-0.2cm}+ w_\infty \epsilon_\infty \norm{F_{y,j} \hatbm{\Phi}_{yy}}^{\star}_1\hspace{-0.1cm}\left(\frac{1+\tau \norm{\hatbf{G}}_\infty}{1-\epsilon_\infty \tau}\right)\\
   &=f_{2,j}(\hatbm{\Phi})\,.
\end{align*}
}
Lastly, remembering that $\hatbm{\Phi}_{uu} = I + \hatbm{\Phi}_{uy} \hatbf{G}$ and noticing that
{\small
\begin{equation*}
(I-\hatbm{\Phi}_{uy}\mathbf{\Delta})^{-1}\hatbm{\Phi}_{uu}= \hatbm{\Phi}_{uu}+\hatbm{\Phi}_{uy}\bm{\Delta}(I-\hatbm{\Phi}_{uy}\bm{\Delta})^{-1}\hatbm{\Phi}_{uu}\,,
\end{equation*}
}
we have
{\small
\begin{align*}
    &w_\infty\norm{F_{u,j} (I-\hatbm{\Phi}_{uy}\mathbf{\Delta})^{-1}\hatbm{\Phi}_{uu}}^{\star}_1 \\
    &\leq w_\infty \norm{F_{u,j} \hatbm{\Phi}_{uu}}^{\star}_1 \\
    &+w_\infty \epsilon_\infty \frac{\norm{F_{u,j} \hatbm{\Phi}_{uy}}^{\star}_1}{1-\epsilon_\infty \norm{\hatbm{\Phi}_{uy}}_\infty}
    \left(
    \norm{\hatbf{G}}_\infty \norm{\hatbm{\Phi}_{uy}}_\infty + 1\right)\\
    &\leq w_\infty\hspace{-0.1cm} \norm{F_{u,j} \hatbm{\Phi}_{uu}}^{\star}_1+w_\infty \epsilon_\infty\norm{F_{u,j} \hatbm{\Phi}_{uy}}^{\star}_1\frac{1+\tau \norm{\hatbf{G}}_\infty}{1-\epsilon_\infty \tau}\\
    &= f_{5,j}(\hatbm{\Phi})\,.
\end{align*}
}Similar computations allows one to derive the upper bounds for the remaining terms.

\subsection{Proof of Lemma~\ref{le:feasible}}
\label{app:le:feasible}

First, it is easy to verify that $\tildebm{\Phi}$ satisfies the constraints in~\eqref{prob:quasi_convex_alpha}; indeed, $\tildebm{\Phi}$ comprises the closed-loop responses when we apply $\mathbf{K}^c$ to the estimated plant $\hatbf{G}$. Next, we have 
{\small
\begin{align}
    \normtwo{\tildebm{\Phi}_{uy}} &= \normtwo{\cbm{\Phi}_{uy}(I+\bm{\Delta} \cbm{\Phi}_{uy})^{-1}} \nonumber\\
    &\leq \frac{\normtwo{\cbm{\Phi}_{uy}}}{1-\epsilon_2 \normtwo{\cbm{\Phi}_{uy}}} \leq  \sqrt{2}\frac{\normtwo{\cbm{\Phi}_{uy}}}{1-\epsilon_2 \normtwo{\cbm{\Phi}_{uy}}}\nonumber\\
    &\leq \sqrt{2}\frac{\normtwo{\starbm{\Phi}_{uy}}}{1-\epsilon_2 \normtwo{\starbm{\Phi}_{uy}}}\nonumber= \sqrt{2} \frac{\eta}{\epsilon_2(1-\eta)} = \widetilde{\gamma}\,.\nonumber
\end{align}
}%
Since $\alpha \in [\sqrt{2}\frac{\eta}{\epsilon_2(1-\eta)},\epsilon_2^{-1})$ and  $\eta<\frac{1}{5}$, then $ \tilde{\gamma} \leq \alpha < \epsilon_2^{-1}$. Hence $\widetilde{\gamma}$ is feasible. Similarly,
{
\begin{align}
   & \norm{\tildebm{\Phi}_{uy}}_\infty = \norm{\cbm{\Phi}_{uy}(I+\bm{\Delta} \cbm{\Phi}_{uy})^{-1}}_\infty \nonumber\\
    &\leq \frac{\norm{\cbm{\Phi}_{uy}}_\infty}{1-\epsilon_\infty \norm{\cbm{\Phi}_{uy}}_\infty} \leq  \frac{\norm{\starbm{\Phi}_{uy}}_\infty}{1-\epsilon_\infty \norm{\starbm{\Phi}_{uy}}_\infty} \nonumber= \frac{\zeta}{\epsilon_\infty(1-\zeta)} = \widetilde{\tau}\,.\nonumber
\end{align}
}%
Since $\zeta<\frac{1}{2}$, then $ \tilde{\tau}  < \epsilon_\infty^{-1}$ and hence it is a feasible value for $\tau$. It remains to show that $\tildebm{\Phi}$ satisfies the safety constraints \eqref{eq:quasiconvex_constraints1}-\eqref{eq:quasiconvex_constraints2}.  We know that $\cbm{\Phi}$ is feasible for \eqref{prob:doubly_robust}, and hence $\phi_{1,j}(\cbm{\Phi}) + \phi_{2,j}(\cbm{\Phi}) + \phi_{3,j}(\cbm{\Phi}) \leq \mathbf{b}_{y,j}$ and $\phi_{4,j}(\cbm{\Phi}) + \phi_{5,j}(\cbm{\Phi}) + \phi_{6,j}(\cbm{\Phi}) \leq \mathbf{b}_{u,j}$.  We conclude the proof by showing that  $f_{i,j}(\tildebm{\Phi})\leq \phi_{i,j}(\cbm{\Phi})$ for every $i = 1,\dots, 6$.  We report the full derivations for the most informative terms. 
{
\begin{align*}
    f_{1,j}(\tildebm{\Phi}) &= \frac{v_\infty \norm{F_{y,j} \left(\cbm{\Phi}_{yy}-\cbm{\Phi}_{yy}\bm{\Delta}\cbm{\Phi}_{uy}\left(I+\bm{\Delta}\cbm{\Phi}_{uy}\right)^{-1}\right)}^{\star}_1}{1-\epsilon_\infty \widetilde{\tau}}\\
    &\leq\frac{ v_\infty \norm{F_{y,j}\cbm{\Phi}_{yy}}^{\star}_1 + \frac{v_\infty\epsilon_\infty \norm{F_{y,j} \cbm{\Phi}_{yy}}^{\star}_1 \norm{\cbm{\Phi}_{uy}}_\infty}{1-\epsilon_\infty \norm{\cbm{\Phi}_{uy}}_\infty}}{1-\epsilon_\infty \widetilde{\tau}}\\
    &\leq\frac{ v_\infty \norm{F_{y,j}\cbm{\Phi}_{yy}}^{\star}_1 + \frac{v_\infty\epsilon_\infty \norm{F_{y,j} \cbm{\Phi}_{yy}}^{\star}_1 \norm{\starbm{\Phi}_{uy}}_\infty}{1-\epsilon_\infty \norm{\starbm{\Phi}_{uy}}_\infty}}{1-\epsilon_\infty \widetilde{\tau}}\\
    &\leq \frac{v_\infty \norm{F_{y,j}\cbm{\Phi}_{yy}}^{\star}_1}{1-2\zeta} = \phi_{1,j}(\cbm{\Phi})\,.
\end{align*}
}Similarly, it is easy to show that $f_{4,j}(\tildebm{\Phi})\leq \phi_{4,j}(\cbm{\Phi})$. Next, recalling \eqref{eq:suboptimal} and observing that\color{black}
{
\begin{align*}
    &\tildebm{\Phi}_{yu} = \cbm{\Phi}_{yu} -\cbm{\Phi}_{yy} \bm{\Delta} - \cbm{\Phi}_{yy} \bm{\Delta}\cbm{\Phi}_{uy}(I+\bm{\Delta}\cbm{\Phi}_{uy})^{-1}\hatbf{G}\,,\\
    & \tildebm{\Phi}_{yy} = \cbm{\Phi}_{yy}-\cbm{\Phi}_{yy}\bm{\Delta}\cbm{\Phi}_{uy}\left(I+\bm{\Delta}\cbm{\Phi}_{uy}\right)^{-1}\,,
\end{align*}
}we have
{\small
\begin{align*}
    &f_{2,j}(\tildebm{\Phi})\\
    &\leq w_\infty \norm{F_{y,j}\cbm{\Phi}_{yy}(I+\bm{\Delta}\cbm{\Phi}_{uy})^{-1}(\mathbf{G}-\bm{\Delta})}^{\star}_1+\\
    & ~~+ w_\infty\epsilon_{\infty} \norm{F_{y,j}\cbm{\Phi}_{yy}(I+\bm{\Delta}\cbm{\Phi}_{uy})^{-1}}^{\star}_1  \left(\frac{1+\widetilde{\tau}\norm{\hatbf{G}}_\infty}{1-\epsilon_\infty \widetilde{\tau}}\right)\\
    &\leq w_\infty \hspace{-0.1cm}\norm{F_{y,j}\cbm{\Phi}_{yu}}^{\star}_1\hspace{-0.15cm} \text{+} w_\infty \epsilon_\infty\hspace{-0.1cm} \norm{F_{y,j}\cbm{\Phi}_{yy}}^{\star}_1\hspace{-0.1cm}\left(\hspace{-0.1cm}1\text{+}\frac{\norm{\cbm{\Phi}_{uy}}_\infty \hspace{-0.1cm}\norm{\hatbf{G}}_\infty}{1-\epsilon_\infty \norm{\cbm{\Phi}_{uy}}_\infty}\hspace{-0.1cm}\right) \hspace{-0.1cm}\text{+}\\
    &~~+w_\infty \epsilon_\infty\frac{\norm{F_{y,j}\cbm{\Phi}_{yy}}^{\star}_1\left(\frac{1+\widetilde{\tau}\norm{\hatbf{G}}_\infty}{1-\epsilon_\infty \widetilde{\tau}}\right)}{1-\epsilon_\infty \norm{\cbm{\Phi}_{uy}}_\infty}\leq w_\infty \hspace{-0.1cm}\norm{F_{y,j}\cbm{\Phi}_{yu}}^{\star}_1 +\\
    &~~+w_\infty \norm{F_{y,j}\cbm{\Phi}_{yy}}^{\star}_1\hspace{-0.1cm}\left(\epsilon_\infty+\frac{\zeta \norm{\hatbf{G}}_\infty}{1-\zeta} + \frac{\epsilon_\infty +\frac{\zeta \norm{\hatbf{G}}_\infty}{1-\zeta}}{\left(1-\frac{\zeta}{1-\zeta}\right)\hspace{-0.1cm}(1-\zeta)}\right)\\
    &=w_\infty \hspace{-0.1cm}\norm{F_{y,j}\cbm{\Phi}_{yu}}^{\star}_1 \hspace{-0.1cm}+2w_\infty\hspace{-0.1cm} \norm{F_{y,j}\cbm{\Phi}_{yy}}^{\star}_1\hspace{-0.1cm} \frac{(1-\zeta)\left(\epsilon_\infty+\zeta\norm{\hatbf{G}}_\infty\right)}{1-2\zeta}\\
    &\leq w_\infty \hspace{-0.1cm}\norm{F_{y,j}\cbm{\Phi}_{yu}}^{\star}_1 \hspace{-0.1cm}+\hspace{-0.1cm}2w_\infty\hspace{-0.1cm} \norm{F_{y,j}\cbm{\Phi}_{yy}}^{\star}_1 \hspace{-0.1cm} \frac{\left(\epsilon_\infty+\zeta\norm{\hatbf{G}}_\infty\right)}{1-2\zeta}\\
    &\leq \phi_{2,j}(\cbm{\Phi})\,.
\end{align*}
}
\balance
Similarly, $f_{3,j}(\tildebm{\Phi})\leq \phi_{3,j}(\cbm{\Phi})$ and $f_{6,j}(\tildebm{\Phi})\leq \phi_{6,j}(\cbm{\Phi})$. By only noticing that $\norm{\cbm{\Phi}_{uu}}_\infty\leq 1+\norm{\cbm{\Phi}_{uy}}_\infty\left(\norm{\hatbf{G}}_\infty+\epsilon_\infty\right)$ and that $(1+\zeta)(1-2\zeta)\leq1-\zeta$ for every $\zeta>0$, analogous computations lead to $f_{5,j}(\tildebm{\Phi})\leq \phi_{5,j}(\cbm{\Phi})$.

\subsection{Proof of Theorem~\ref{th:suboptimality}}
\label{app:th:suboptimality}
 By denoting as $\hatbm{\Phi}^\star$ the closed-loop responses obtained by applying $\hatbf{K}^\star$ to $\hatbf{G}$, we have by Lemma~\ref{le:upperbound} and by $\gamma\leq \alpha$
{
\begin{align*}
   & J(\mathbf{G},\starhatbf{K})
             \leq \frac{1}{1-\epsilon_2 \gamma^\star} \times\\
             &\times \norm{\hspace{-0.1cm}\begin{bmatrix}
              \sqrt{1+h(\epsilon_2,\alpha,\hatbf{G}) + h(\epsilon_2,\alpha,\hatbf{y}_0)}\starhatbm{\Phi}_{yy} & \starhatbm{\Phi}_{yu} & \starhatbm{\Phi}_{yy}\hatbf{y}_0\\\sqrt{1+ h(\epsilon_2,\alpha,\hatbf{y}_0)}\starhatbm{\Phi}_{uy} & \starhatbm{\Phi}_{uu} & \starhatbm{\Phi}_{uy}\hatbf{y}_0\hspace{-0.1cm}
             \end{bmatrix}\hspace{-0.1cm}}_F\hspace{-0.15cm},
\end{align*}
}where $\gamma^\star$ is optimal for \eqref{prob:quasi_convex_alpha}.  By Lemma~\ref{le:feasible}, under the  assumptions on $\eta,\zeta,\alpha$ we have that $(\widetilde{\gamma},\widetilde{\tau},\tildebm{\Phi})$ belongs to the feasible set of \eqref{prob:quasi_convex_alpha}. Hence, by suboptimality of any feasible solution:
{
\begin{align*}
&J(\mathbf{G},\starhatbf{K})\leq \frac{1}{1-\epsilon_2 \widetilde{\gamma}}\times \\
&\times \norm{\hspace{-0.1cm}\begin{bmatrix}
              \sqrt{1+h(\epsilon_2,\alpha,\hatbf{G}) + h(\epsilon_2,\alpha,\hatbf{y}_0)}\tildebm{\Phi}_{yy} & \tildebm{\Phi}_{yu} & \tildebm{\Phi}_{yy}\hatbf{y}_0\\\sqrt{1+ h(\epsilon_2,\alpha,\hatbf{y}_0)}\tildebm{\Phi}_{uy} & \tildebm{\Phi}_{uu} & \tildebm{\Phi}_{uy}\hatbf{y}_0
             \end{bmatrix}\hspace{-0.1cm}}_F\hspace{-0.15cm}.
\end{align*}
}Using the definition of $\tildebm{\Phi}$ from Lemma~\ref{le:feasible}, we now relate
{
\begin{equation*}
    \widetilde{C}\hspace{-0.1cm} =\hspace{-0.1cm} \norm{\hspace{-0.1cm}\begin{bmatrix}
              \sqrt{1+h(\epsilon_2,\alpha,\hatbf{G}) + h(\epsilon_2,\alpha,\hatbf{y}_0)}\tildebm{\Phi}_{yy} & \tildebm{\Phi}_{yu} & \tildebm{\Phi}_{yy}\hatbf{y}_0\\\sqrt{1+ h(\epsilon_2,\alpha,\hatbf{y}_0)}\tildebm{\Phi}_{uy} & \tildebm{\Phi}_{uu} & \tildebm{\Phi}_{uy}\hatbf{y}_0
             \end{bmatrix}\hspace{-0.1cm}}_F\hspace{-0.25cm},
\end{equation*}
}%
to the optimal cost of problem~\eqref{prob:doubly_robust}. Recalling the expressions of $M^c$ and $V^c$, and similarly to Lemma~\ref{le:feasible},
 {
\begin{align*}
    &\widetilde{C} = \Big( \norm{\begin{bmatrix}
              \tildebm{\Phi}_{yy} & \tildebm{\Phi}_{yu} & \tildebm{\Phi}_{yy}\hatbf{y}_0\\\tildebm{\Phi}_{uy} & \tildebm{\Phi}_{uu} & \tildebm{\Phi}_{uy}\hatbf{y}_0
             \end{bmatrix}}_F^2 + \\
             &+\hspace{-0.1cm}\left(h(\epsilon_2,\alpha,\hatbf{G})\hspace{-0.07cm} +\hspace{-0.07cm} h(\epsilon_2,\alpha,\hatbf{y}_0)\hspace{-0.1cm}\right)\hspace{-0.1cm}\norm{\tildebm{\Phi}_{yy}}_F^2\hspace{-0.25cm}+\hspace{-0.05cm} h(\epsilon_2,\alpha,\hatbf{y}_0)\norm{\tildebm{\Phi}_{uy}}_F^2\Big)^{\frac{1}{2}}\\
             &\leq  \frac{\sqrt{J(\mathbf{G},\mathbf{K}^c)^2+ M^c \norm{\cbm{\Phi}_{yy}}_F^2 + V^c \norm{\cbm{\Phi}_{uy}}_F^2}}{1-\epsilon_2\norm{\cbm{\Phi}_{uy}}_2}\,.
\end{align*}
}

Thus, we have established the chain of inequalities
{\small
\begin{align*}
    &J(\mathbf{G},\starhatbf{K})^2 \hspace{-0.05cm}\leq \hspace{-0.05cm}\frac{\widetilde{C}^2}{(1-\epsilon_2 \widetilde{\gamma})^2} \\
    &\leq \hspace{-0.05cm} \frac{\Big(J(\mathbf{G},\mathbf{K}^c)^2+M^c \norm{\cbm{\Phi}_{yy}}_F^2 + V^c \norm{\cbm{\Phi}_{uy}}_F^2\Big)}{(1-\epsilon_2 \widetilde{\gamma})^2(1-\epsilon_2\norm{\cbm{\Phi}_{uy}}_2)^2}\,.
\end{align*}}%
Next, notice that, by definition, we have $J(\mathbf{G},\mathbf{K}^c)^2 = (S(\epsilon_\infty)+1)J(\mathbf{G},\mathbf{K}^\star)^2$.
Recalling that $\norm{\cbm{\Phi}_{uy}}_2\leq \norm{\starbm{\Phi}_{uy}}_2$ and $\norm{\cbm{\Phi}_{yy}}_2\leq \norm{\starbm{\Phi}_{yy}}_2$, observing that $\eta < \frac{1}{5}$ implies $1 - (1 + \sqrt{2})\eta \leq 2$, \color{black} and further noticing that if $M,V>0$, then $Ma^2+Vb^2 \leq (M+V)(a^2+b^2)$, we can establish:
 {\small
\begin{align*}
   & \frac{J(\mathbf{G},\starhatbf{K})^2-J(\mathbf{G},\starbf{K})^2}{J(\mathbf{G},\starbf{K})^2}
     \leq\left(\frac{1}{(1-\epsilon_2 \normtwo{\cbm{\Phi}_{uy}})^2(1-\epsilon_2 \widetilde{\gamma})^2}\right) \times\\
     &~~~\times\left(S(\epsilon_\infty)+1+\frac{M^c\norm{\cbm{\Phi}_{yy}}_F^2+V^c\norm{\cbm{\Phi}_{uy}}_F^2}{J(\mathbf{G},\mathbf{K}^\star)^2}\right)-1\\
     &\leq\hspace{-0.1cm} \left(\frac{1}{(1-\eta)^2(1-\sqrt{2}\frac{\eta}{1-\eta})^2}-1+\frac{S(\epsilon_\infty)}{(1-\eta)^2(1-\sqrt{2}\frac{\eta}{1-\eta})^2}\right)\hspace{-0.1cm}+\\
     &~~~+\frac{M^c\norm{\cbm{\Phi}_{yy}}_F^2+V^c\norm{\cbm{\Phi}_{uy}}_F^2}{(1-\eta)^2(1-\sqrt{2}\frac{\eta}{1-\eta})^2J(\mathbf{G},\mathbf{K}^\star)^2}\\
     &\leq  \eta \left(\frac{2(1+\sqrt{2})-(1+\sqrt{2})^2 \eta}{(1-(1+\sqrt{2})\eta)^2}\right) + \frac{S(\epsilon_\infty)}{(1-(1+\sqrt{2})\eta)^2}+\\
     &~~~+\frac{(M^c+V^c)J(\mathbf{G},\mathbf{K}^c)^2}{(1-(1+\sqrt{2})\eta)^2J(\mathbf{G},\mathbf{K}^\star)^2}\\
    &\leq   20\eta +4(M^c+V^c) +4S(\epsilon_\infty)(1+M^c+V^c)\,.
\end{align*}
}
Last, we prove that $20\eta + 4(M^c+V^c) = \mathcal{O}\left(\epsilon_2 \left(1+\norm{\starbm{\Phi}_{uy}}_2\right)\left(1+\norm{\mathbf{G}}_2+\norm{\mathbf{y}_0}_2\right)^2\right)$. First, notice that $M^c+V^c \leq M^\star+V^\star$, where
{\small
\begin{align*}
    M^\star &\hspace{-0.05cm}=\hspace{-0.05cm} h(\epsilon_2,\alpha,\hatbf{G})\hspace{-0.05cm} +\hspace{-0.05cm} h(\epsilon_2,\alpha,\hatbf{y}_0)\hspace{-0.05cm}\\
    &+\hspace{-0.05cm}h(\epsilon_2,\norm{\starbm{\Phi}_{uy}}_2,\mathbf{G}) \hspace{-0.05cm}+\hspace{-0.05cm} h(\epsilon_2,\norm{\starbm{\Phi}_{uy}}_2,\mathbf{y}_0)\,,\\
    V^\star &= h(\epsilon_2,\alpha,\hatbf{y}_0)+h(\epsilon_2,\norm{\starbm{\Phi}_{uy}}_2,\mathbf{y}_0)\,.
\end{align*}
}
Using $\alpha\leq 5\norm{\starbm{\Phi}_{uy}}_2$, $\eta<\frac{1}{5}$, $\norm{\hatbf{G}}_2 \leq \norm{\mathbf{G}}_2+\epsilon_2$ and $\norm{\hatbf{y}_0}_2 \leq \norm{\mathbf{y}_0}_2+\epsilon_2$, we deduce that
{\small
\begin{align*}
    &M^\star\hspace{-0.1cm} \leq 2\Big[\epsilon_2^2(2 \hspace{-0.05cm}+\hspace{-0.05cm} 5\norm{\starbm{\Phi}_{uy}}_2\|\mathbf{G}\|_2 )^2\hspace{-0.15cm}+\hspace{-0.05cm}2\epsilon_2\norm{\mathbf{G}}_2(2\hspace{-0.05cm}+\hspace{-0.05cm}5\norm{\starbm{\Phi}_{uy}}_2\norm{\mathbf{G}}_2)\\
    &+\epsilon_2^2(2 + 5\norm{\starbm{\Phi}_{uy}}_2\|\mathbf{y}_0\|_2 )^2\\
    &+2\epsilon_2\norm{\mathbf{y}_0}_2(2+5\norm{\starbm{\Phi}_{uy}}_2\norm{\mathbf{y}_0}_2)\Big] +\mathcal{O}(\epsilon_2^2)\\
    &= \mathcal{O}\left(\epsilon_2 \left(1+\norm{\starbm{\Phi}_{uy}}_2\right)\left(1+\norm{\mathbf{G}}_2+\norm{\mathbf{y}_0}_2\right)^2\right)\,,
\end{align*}
}
and, similarly, $V^\star= \mathcal{O}\left(\epsilon_2 \left(1+\norm{\starbm{\Phi}_{uy}}_2\right)\left(1+\norm{\mathbf{y}_0}_2\right)^2\right)$.

\end{document}